\newtheorem{Proposition}{Proposition}
\newtheorem{Example}{Example}
\newtheorem{Remark}{Remark}
\newtheorem{Definition}{Definition}
\def\skl{\mathrm{skl}}
\def\calF{\mathcal{F}}
\def\Sym{\mathrm{Sym}}
\def\calB{\mathcal{B}}
\def\calD{\mathcal{D}}
\def\det{\mathrm{det}}
\def\tr{\mathrm{tr}}
\def\id{\mathrm{id}}
\def\calE{\mathcal{E}}
\def\calX{\mathcal{X}}
\def\bbN{\mathbb{N}}
\def\dmu{\mathrm{d}\mu}
\def\dP{\mathrm{d}P}
\def\Inner#1#2{{\left\langle#1,#2\right\rangle}}
\def\inner#1#2{{\left\langle#1,#2\right\rangle}}
\def\calM{\mathcal{M}}
\def\calF{\mathcal{F}}
\def\calE{\mathcal{E}}
\def\bartheta{{\bar\theta}}
\def\dmu{\mathrm{d}\mu}
\def\tp{{\tilde{p}}}
\def\tq{{\tilde{q}}}
\def\KL{\mathrm{KL}}
\def\dlambda{\mathrm{d}\lambda}
\def\calA{\mathcal{A}}
\def\bbR{\mathbb{R}}
\def\st{\ :\ }
\def\calP{\mathcal{P}}
\title{Divergences induced by dual subtractive and divisive normalizations of exponential families and their convex deformations}
 \date{}
\author{Frank Nielsen\\ \ \\ Sony Computer Science Laboratories Inc, Japan}
\begin{document}

\maketitle

\begin{abstract}
Exponential families are statistical models which are the workhorses in statistics, information theory, and machine learning among others.
An exponential family can either be normalized subtractively by its cumulant or free energy function or equivalently normalized divisively by its partition function.
Both subtractive and divisive normalizers are strictly convex and smooth functions inducing pairs of Bregman and Jensen divergences.
It is well-known that skewed Bhattacharryya distances between probability densities of an exponential family amounts to
skewed Jensen divergences induced by the cumulant function between their corresponding natural parameters, and
 in limit cases that the sided Kullback-Leibler divergences amount to reverse-sided Bregman divergences.
In this paper, we first show that the $\alpha$-divergences between unnormalized densities of an exponential family amounts to 
scaled $\alpha$-skewed Jensen divergences induced by the partition function.
We then show how comparative convexity with respect to a pair of quasi-arithmetic means allows to deform both convex functions and their arguments, and thereby define dually flat spaces with corresponding divergences when ordinary convexity is preserved.
\end{abstract}

 \noindent Keywords: convex duality; exponential family; Bregman divergence; Jensen divergence;  Bhattacharyya distance; R\'enyi divergence; $\alpha$-divergences; comparative convexity;  log convexity; exponential convexity; quasi-arithmetic means; information geometry.

%%%%%%%%%
\section{Introduction}\label{sec:intro}
%%%%%%%%%

In information geometry~\cite{IG-2016}, any strictly convex and smooth function induces a dually flat space (DFS) with a canonical divergence  which can be expressed in charts either as dual Bregman divergences~\cite{Bregman-1967} or equivalently as dual Fenchel-Young divergences~\cite{nielsen2019monte}.
For example, the cumulant function of an exponential family~\cite{brown1986fundamentals} (also called free energy) generates a DFS: An exponential family manifold~\cite{scarfone2014legendre} with the canonical divergence yielding the reverse Kullback-Leibler divergence.
Another typical example is the negative entropy (negentropy for short) of a mixture family inducing a DFS: A mixture family manifold with the canonical divergence yielding the Kullback-Leibler divergence~\cite{nielsen2019monte}.
Any  strictly convex and smooth function also induces a family of scaled skewed Jensen divergences~\cite{zhang2004divergence,nielsen2011burbea} which includes in limit cases the sided forward and reverse Bregman divergences.

In \S\ref{sec:EF}, we show that there are two equivalent approaches to normalize an exponential family: Either by its cumulant function or by its partition function. Since both cumulant and partition functions are convex, they induce (i) families of  scaled skewed Jensen divergences and (ii) dually flat spaces with corresponding Bregman divergences corresponding to statistical divergences (Proposition~\ref{prop:pair}).
In \S\ref{sec:divF}, we recall the well-known result that the statistical $\alpha$-skewed Bhattacharyya distances between {\em probability  densities} of  an exponential family  amount to a scaled $\alpha$-skewed Jensen divergence between their natural parameters.
In \S\ref{sec:divZ}, we prove that the $\alpha$-divergences between   {\em unnormalized densities} of a exponential family amount to scaled $\alpha$-skewed Jensen divergence between their natural parameters (Proposition~\ref{prop:Z}).
More generally, we explain  in section~\ref{sec:deforming}  how to deform a convex function using comparative convexity~\cite{niculescu2006convex}:
 When ordinary convexity of the deformed convex function is preserved, we thus get new  skew Jensen divergences and DFSs with corresponding Bregman divergences.
Section~\ref{sec:concl} concludes this work with a discussion.

%%%%%%%%%
\section{Dual subtractive and divisive normalizations of exponential families}\label{sec:EF}
%%%%%%%%%

Let $(\calX,\calA,\mu)$ be a measure space where $\calX$ denotes the sample set (e.g., finite alphabet, $\bbN$, $\bbR^d$, space of positive-definite matrices $\Sym_{++}(d)$, etc.), $\calA$ a $\sigma$-algebra on $\calX$ (e.g., power set $2^\calX$, Borel $\sigma$-algebra $\calB(\calX)$, etc.), and $\mu$ a positive measure (e.g., counting measure or Lebesgue measure) on the measurable space $(\calX,\calA)$.

An exponential family~\cite{brown1986fundamentals} is a set of probability distributions $\calP=\{P_\lambda \st \lambda\in\Lambda\}$ all dominated by $\mu$ such that their Radon-Nikodym densities $p_\lambda(x)=\frac{\dP_\lambda}{\dmu}(x)$  can be expressed as
$$
p_\lambda(x) \propto \tp_\lambda(x) =\exp\left(\Inner{\theta(\lambda)}{t(x)}\right)  \, h(x),
$$
where $\theta(\lambda)$ is the natural parameter, $t(x)$ is the sufficient statistic, 
$h(x)$ is an auxiliary term used to define the base measure with respect to $\mu$, and
$\inner{\cdot}{\cdot}$ an inner product defined on the parameter space $\Lambda$.
The unnormalized positive density $\tp_\lambda(x)$ is indicated with a tilde notation, and the normalized probability density is obtained as
$p_\lambda(x)= \frac{1}{Z(\lambda)}\tp_\lambda(x)$ where $Z(\lambda)=\int \tp_\lambda(x)\dmu(x)$ is the Laplace transform of $\mu$. 

Exponential families include many well-known distributions: For example, the Bernoulli distributions, the Gaussian or normal distributions, the Gamma and Beta distributions, the Wishart distributions, the Poisson distributions, the Rayleigh distributions, etc. 
The categorical distributions (i.e., discrete distributions on a finite alphabet sample space) form an exponential family too, and exponential families can universally model arbitrarily closely any smooth density~\cite{CobbPEF-1983,nielsen2021fast}.
Furthermore, any two probability measures $Q$ and $R$ with densities $q$ and $r$ with respect to dominating measure $\mu=\frac{Q+R}{2}$ define an exponential family :
$$
\calP_{Q,R}=\{ p_\lambda(x) \propto q^\lambda(x)r^{1-\lambda}(x) \st \lambda\in (0,1)\},
$$ 
called the likelihood ratio exponential family~\cite{grunwald2007minimum} because the sufficient statistic is $t(x)=\log\frac{q(x)}{r(x)}$ (with auxiliary carrier term $h(x)=r(x)$) or the Bhattacharryya arc since the cumulant function of $\calP_{Q,R}$ is expressed as the negative of the  skewed Bhattacharryya distances~\cite{kailath1967divergence,nielsen2011burbea}.
In machine learning, undirected graphical models~\cite{wainwright2008graphical} and energy-based models~\cite{lecun2006tutorial} including Markov random fields~\cite{kindermann1980markov} and conditional random fields are
 exponential families~\cite{dai2019exponential}.

Let $\tp_\theta(x)=\exp\left(\inner{\theta}{t(x)}\right)  \, h(x)$ denote the unnormalized density expressed using the natural parameter $\theta=\theta(\lambda)$.
We can normalize $\tp_\theta(x)$ either by using the partition function $Z(\theta)$ or equivalently by using the cumulant function $F(\theta)$ as follows:
\begin{eqnarray}
p_\theta(x) &=& \frac{\exp(\inner{\theta}{t(x)})}{Z(\theta)}  \, h(x),\label{eq:divnorm} \\ 
&=& \exp\left(\inner{\theta}{t(x)}-F(\theta)+k(x)\right), \label{eq:subnorm}
\end{eqnarray}
where $h(x)=\exp(k(x))$, $Z(\theta)=\int \tp_\theta(x) \dmu(x)$ and $F(\theta)=\log Z(\theta)=\log \int \tp_\theta(x) \dmu(x)$.
Thus the logarithm and exponential functions allows to convert to and from the dual normalizers $Z$ and $F$:
$$
Z(\theta)=\exp(F(\theta))  \Leftrightarrow F(\theta)=\log Z(\theta).
$$
We may view Eq.~\ref{eq:divnorm} as an exponential tilting~\cite{efron2022exponential} of density $h(x)\dmu(x)$.

In the context of $\lambda$-deformed exponential families~\cite{zhang2021lambda} which generalize exponential families, 
function $Z(\theta)$ is called the divisive normalization factor (Eq.~\ref{eq:divnorm}) and function $F(\theta)$ is called the subtractive normalization factor (Eq.~\ref{eq:subnorm}). 
Notice that function $F(\theta)$ is called the cumulant function because when $X\sim p_\theta(x)$ is a random variable following a probability distribution of an exponential family,
 function $F(\theta)$ appears in the cumulant generating function of $X$: $K_X(t)=\log E_X[e^{\inner{t}{X}}]=F(\theta+t)-F(\theta)$.
The cumulant function is also called the log-normalizer or log-partition function in statistical physics.
Since $Z>0$ and $F=\log Z$, we deduce that $F\geq Z$  since $\log x\leq x$ for $x>0$.

The order $m$ of an exponential family is the dimension of the sufficient statistic $t(x)$.
The full natural parameter space $\Theta\in\bbR^m$ is defined by $\Theta=\{\theta \st  \int \tp_\theta(x) \dmu(x) <\infty\}$.
The exponential family is said full when $\theta$ is allowed to range in the full parameter space $\Theta$ (curved exponential families~\cite{IG-2016} have natural parameters $c(\theta)$ ranging in a subset of $\Theta$), and regular when $\Theta$ is an open (convex) domain.
A full regular exponential family such that both $t(x)=x$ and $k(x)=0$ is called a natural exponential family.
Without loss of generality, we consider natural exponential families or exponential families with $k(x)=0$ in the remainder.

It is well-known that the cumulant function $F(\theta)$ is a strictly convex function and that the partition function $Z(\theta)$ is strictly log-convex~\cite{barndorff2014information} (see Appendix~\ref{sec:appendix}):

\begin{Proposition}[\cite{barndorff2014information}]\label{prop:naturalconvex}
The natural parameter space $\Theta$ of an exponential family is convex.
\end{Proposition}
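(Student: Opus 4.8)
The plan is to establish convexity directly from the definition: given two feasible natural parameters, I will show that every point on the segment joining them is again feasible, i.e.\ that the defining integral $Z(\theta)=\int\tp_\theta(x)\dmu(x)$ stays finite. The whole argument rests on the observation that the unnormalized density is \emph{log-affine} in $\theta$, so that along a segment it factorizes as a product of powers of the endpoint densities. Concretely, fix $\theta_0,\theta_1\in\Theta$ and $\alpha\in(0,1)$, and set $\theta_\alpha=(1-\alpha)\theta_0+\alpha\theta_1$. Using the linearity of $\theta\mapsto\inner{\theta}{t(x)}$ together with $h(x)=h(x)^{1-\alpha}h(x)^\alpha$ (valid since $h\geq 0$), the integrand splits as
\begin{equation*}
\tp_{\theta_\alpha}(x)=\exp\!\left(\inner{(1-\alpha)\theta_0+\alpha\theta_1}{t(x)}\right)h(x)=\tp_{\theta_0}(x)^{1-\alpha}\,\tp_{\theta_1}(x)^{\alpha}.
\end{equation*}

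The key analytic step is then a single application of H\"older's inequality with the conjugate exponents $p=\frac{1}{1-\alpha}$ and $q=\frac{1}{\alpha}$, which satisfy $\frac1p+\frac1q=1$. Taking $f=\tp_{\theta_0}^{1-\alpha}$ and $g=\tp_{\theta_1}^{\alpha}$ gives $f^p=\tp_{\theta_0}$ and $g^q=\tp_{\theta_1}$, so that
\begin{equation*}
Z(\theta_\alpha)=\int \tp_{\theta_0}(x)^{1-\alpha}\,\tp_{\theta_1}(x)^{\alpha}\dmu(x)\leq \left(\int\tp_{\theta_0}\dmu\right)^{1-\alpha}\left(\int\tp_{\theta_1}\dmu\right)^{\alpha}=Z(\theta_0)^{1-\alpha}Z(\theta_1)^{\alpha}.
\end{equation*}
Since $\theta_0,\theta_1\in\Theta$ means $Z(\theta_0),Z(\theta_1)<\infty$, the right-hand side is finite, hence $Z(\theta_\alpha)<\infty$ and $\theta_\alpha\in\Theta$. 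As the degenerate cases where $\Theta$ is empty or a singleton are trivially convex, this proves the claim.

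I do not anticipate a genuine obstacle here: once one recognizes the log-affine factorization, the statement is a one-line consequence of H\"older. The only points requiring a little care are bookkeeping --- checking that the exponents are conjugate and that $h\geq 0$ licenses splitting $h=h^{1-\alpha}h^\alpha$ --- together with noting that the very same chain of inequalities simultaneously yields $\log Z(\theta_\alpha)\leq(1-\alpha)\log Z(\theta_0)+\alpha\log Z(\theta_1)$, i.e.\ the strict log-convexity of $Z$ (equivalently convexity of $F=\log Z$) asserted in the surrounding discussion. This is why I would present the H\"older bound in its full multiplicative form rather than merely asserting finiteness, since it delivers both the convexity of the domain and the convexity of the normalizers at once.
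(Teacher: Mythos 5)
Your proof is correct and follows essentially the same route as the paper's: factor $\tp_{\theta_\alpha}$ as $\tp_{\theta_0}^{1-\alpha}\tp_{\theta_1}^{\alpha}$ using log-affinity in $\theta$, apply H\"older's inequality with conjugate exponents $\frac{1}{1-\alpha}$ and $\frac{1}{\alpha}$ to get $Z(\theta_\alpha)\leq Z(\theta_0)^{1-\alpha}Z(\theta_1)^{\alpha}<\infty$, and conclude $\theta_\alpha\in\Theta$; the paper likewise reuses this same multiplicative bound to deduce the log-convexity of $Z$ in the following proposition. The only caveat is your parenthetical claim of \emph{strict} log-convexity: H\"older alone gives the non-strict inequality, and strictness needs the equality-condition analysis, but that is a side remark and does not affect the convexity of $\Theta$.
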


\begin{Proposition}[\cite{barndorff2014information}]\label{prop:FZconvexity} 
The  cumulant function $F(\theta)$ is strictly convex and the partition function $Z(\theta)$  is positive and strictly log-convex.
\end{Proposition}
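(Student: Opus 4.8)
The plan is to treat the two assertions together, since the positivity of $Z$ is immediate and the strict log-convexity of $Z$ is, by definition, precisely the statement that $F=\log Z$ is strictly convex. Positivity follows at once: with $k(x)=0$ the integrand $\exp(\inner{\theta}{t(x)})$ is strictly positive, so $Z(\theta)=\int\exp(\inner{\theta}{t(x)})\dmu(x)>0$ for every $\theta$ in the natural parameter space $\Theta$, and hence $F(\theta)=\log Z(\theta)$ is finite and well defined on $\Theta$, which is convex by Proposition~\ref{prop:naturalconvex}. It therefore suffices to prove that $F$ is strictly convex, equivalently that $Z$ is strictly log-convex.

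First I would establish the (non-strict) log-convexity of $Z$ directly from H\"older's inequality. Fix $\theta_0,\theta_1\in\Theta$, a mixing weight $\alpha\in(0,1)$, and set $\theta_\alpha=\alpha\theta_0+(1-\alpha)\theta_1$, which lies in $\Theta$ by convexity. Writing the integrand at $\theta_\alpha$ as a product,
\[
Z(\theta_\alpha)=\int \left(\exp(\inner{\theta_0}{t(x)})\right)^{\alpha}\left(\exp(\inner{\theta_1}{t(x)})\right)^{1-\alpha}\dmu(x),
\]
H\"older's inequality with conjugate exponents $1/\alpha$ and $1/(1-\alpha)$ gives
\[
Z(\theta_\alpha)\le Z(\theta_0)^{\alpha}\,Z(\theta_1)^{1-\alpha}.
\]
Taking logarithms yields $F(\theta_\alpha)\le \alpha F(\theta_0)+(1-\alpha)F(\theta_1)$, i.e.\ convexity of $F$ and log-convexity of $Z$ simultaneously.

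It then remains to upgrade these to strict inequalities, which is where the real work lies. Equality in H\"older's inequality forces the two integrands to be proportional $\mu$-almost everywhere, i.e.\ $\inner{\theta_0-\theta_1}{t(x)}$ to be constant for $\mu$-almost all $x$. For a minimal (full, regular) exponential family the components of $t$ together with the constant function are linearly independent in $L^1(\mu)$ — equivalently, $t(X)$ is not concentrated on a proper affine subspace — so this forces $\theta_0=\theta_1$, and the inequality is strict whenever $\theta_0\ne\theta_1$. Alternatively, and perhaps more transparently, I would compute the Hessian of $F$ on the open set $\Theta$: differentiating twice under the integral sign gives $\nabla F(\theta)=E_\theta[t(X)]$ and $\nabla^2 F(\theta)=\Cov_\theta[t(X)]$, a covariance matrix that is positive semidefinite in general and positive definite precisely under the same minimality condition. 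The principal obstacle in either route is analytic rather than algebraic: justifying the interchange of differentiation and integration (or, for the H\"older route, the finiteness of the relevant moments) on the interior of $\Theta$. This rests on the local uniform convergence and analyticity of the Laplace transform $Z$ on $\Theta$, which is exactly the standard regularity fact to be recorded in Appendix~\ref{sec:appendix}.
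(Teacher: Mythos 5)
Your proof follows essentially the same route as the paper's: H\"older's inequality with conjugate exponents $1/\alpha$ and $1/(1-\alpha)$ gives $Z(\theta_\alpha)\le Z(\theta_0)^\alpha Z(\theta_1)^{1-\alpha}$, taking logarithms gives convexity of $F$, and log-convexity of $Z$ is then by definition. In fact you are more careful than the paper on the one delicate point: the paper simply asserts that equality holds iff $\theta_0=\theta_1$, whereas you correctly trace the H\"older equality case to $\inner{\theta_0-\theta_1}{t(x)}$ being $\mu$-a.e.\ constant and note that ruling this out requires the minimality (affine non-degeneracy of $t$) implicitly assumed for the family.
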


It can be shown that the cumulant and partition functions are smooth $C^\infty$  analytic functions~\cite{brown1986fundamentals}.

A remarkable property is that strictly log-convex functions are also strictly convex:

\begin{Proposition}\label{prop:logcvxcvx}
A strictly log-convex function $Z:\Theta\subset\bbR^m\rightarrow\bbR$ is strictly convex. 
\end{Proposition}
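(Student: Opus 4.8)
The plan is to reduce the claim to the elementary interaction between the strict convexity of $\log Z$ and the fact that $\exp$ is convex and strictly increasing. Write $F=\log Z$, so that $Z=\exp(F)$; by the definition of strict log-convexity, $F$ is strictly convex on the (convex) domain $\Theta$. Convexity of $\Theta$ is the standing assumption making log-convexity meaningful, and in the exponential-family instance it is furnished by Proposition~\ref{prop:naturalconvex}.

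Fix two distinct points $\theta_0,\theta_1\in\Theta$ and a weight $\alpha\in(0,1)$, and set $\theta_\alpha=(1-\alpha)\theta_0+\alpha\theta_1$. First I would invoke the strict convexity of $F$ to get the strict inequality $F(\theta_\alpha)<(1-\alpha)F(\theta_0)+\alpha F(\theta_1)$. Applying the strictly increasing map $\exp$ to both sides preserves this strict inequality, yielding $Z(\theta_\alpha)=\exp(F(\theta_\alpha))<\exp\bigl((1-\alpha)F(\theta_0)+\alpha F(\theta_1)\bigr)$. Finally I would bound the right-hand side using the ordinary convexity of $\exp$ — equivalently the weighted arithmetic-geometric mean inequality — by $(1-\alpha)\exp(F(\theta_0))+\alpha\exp(F(\theta_1))=(1-\alpha)Z(\theta_0)+\alpha Z(\theta_1)$. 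Chaining the two inequalities gives $Z(\theta_\alpha)<(1-\alpha)Z(\theta_0)+\alpha Z(\theta_1)$, which is exactly the strict convexity of $Z$.

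The one point requiring care — and the only place subtlety enters — is that the composition of a strictly convex function with a convex increasing function is \emph{not} automatically strictly convex, since the outer convexity of $\exp$ is used only as a weak bound in the last step. Strictness is rescued entirely by the first step: it is the strict convexity of $F$ together with the strict monotonicity of $\exp$ that forces the middle inequality to be strict. I do not anticipate any genuine obstacle here; notably, smoothness plays no role, so the argument applies to any strictly log-convex $Z$ regardless of differentiability.

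If one instead prefers a calculus proof in the smooth setting already available for exponential families, an alternative is to differentiate $F=\log Z$ twice: from $\nabla^2 F=\frac{1}{Z}\nabla^2 Z-\frac{1}{Z^2}\nabla Z\,\nabla Z^\top$ one obtains $\nabla^2 Z=Z\bigl(\nabla^2 F+\nabla F\,\nabla F^\top\bigr)$. Since $Z>0$, $\nabla^2 F\succ 0$ (strict convexity of $F$), and $\nabla F\,\nabla F^\top\succeq 0$, the Hessian $\nabla^2 Z$ is positive definite, giving strict convexity. This route is shorter but less general, as it presupposes $Z\in C^2$, whereas the inequality-based argument above is purely definitional.
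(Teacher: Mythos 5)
Your proof is correct and follows essentially the same route as the paper's: pass from strict convexity of $F=\log Z$ to the geometric-mean bound via the strictly increasing map $\exp$, then bound the weighted geometric mean by the arithmetic mean using convexity of $\exp$ (equivalently AM--GM), and chain the inequalities. Your remark that only the first inequality needs to be strict is in fact slightly more careful than the paper's text, which asserts strictness of the second inequality as well even though that fails when $Z(\theta_0)=Z(\theta_1)$ with $\theta_0\not=\theta_1$; your $C^2$ alternative likewise mirrors the paper's appendix argument.
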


\begin{proof}
By definition, function $Z(\theta)$ is strictly log-convex if and only if:
\begin{equation}\label{eq:proofzlogconvex}
\forall \theta_0\not=\theta_1,\quad Z(\alpha\theta_0+(1-\alpha\theta_1))< Z(\theta_0)^\alpha\, Z(\theta_1)^{1-\alpha},
\end{equation}
i.e., by taking the logarithm on both sides of the inequality, $F=\log Z$ is strictly convex:
\begin{eqnarray*}
\forall \theta_0\not=\theta_1,\quad \log Z(\alpha\theta_0+(1-\alpha)\theta_1) &<&\alpha\log Z(\theta_0)+(1-\alpha)\log Z(\theta_1),\\
\Leftrightarrow\quad F(\alpha\theta_0+(1-\alpha)\theta_1) &<& \alpha F(\theta_0) + (1-\alpha) F(\theta_1).
\end{eqnarray*}

Since $f(x)=\exp(x)$ is strictly convex (because $f''(x)=\exp(x)>0$), we have for all $\alpha\in (0,1)$:
$$
f(\alpha F(\theta_0)+(1-\alpha) F(\theta_1))< \alpha f(F(\theta_0)) + (1-\alpha) f(F(\theta_1)).
$$
Letting $F(\theta)=\log Z(\theta)$ in the above inequality, we get:
\begin{eqnarray}
\exp(\alpha \log Z(\theta_0)+(1-\alpha) \log Z(\theta_1) &<& \alpha\exp(\log Z(\theta_0)) + (1-\alpha) \exp(\log Z(\theta_1)),\\
Z(\theta_0)^\alpha\, Z(\theta_1)^{1-\alpha} &<&\alpha Z(\theta_0)+(1-\alpha) Z(\theta_1),\label{eq:zz}
\end{eqnarray}
and therefore we get from Eq.~\ref{eq:proofzlogconvex} and Eq.~\ref{eq:zz}:
\begin{equation}\label{eq:ineqZZ}
\forall \theta_0\not=\theta_1, Z(\alpha\theta_0+(1-\alpha\theta_1))<Z(\theta_0)^\alpha\, Z(\theta_1)^{1-\alpha}<\alpha Z(\theta_0)+(1-\alpha) Z(\theta_1).
\end{equation}
That is, $Z$ is strictly convex.

An alternative proof may   use the weighted arithmetic-mean geometric mean inequality~\cite{niculescu2006convex} (AM-GM inequality):
\begin{equation}\label{eq:amgm}
\forall a,b>0, \quad \alpha a+(1-\alpha)b\geq  a^\alpha b^{1-\alpha},
\end{equation}
with equality if and only if $a=b$.
Let $a=Z(\theta_0)$ and $b=Z(\theta_1)$ in the inequality of Eq.~\ref{eq:amgm}, we recover the inequality of Eq.~\ref{eq:zz}.
\end{proof}
See the Appendix for the weaker proposition of a log-convex function being convex using the assumption of second-order differentiability of convex functions.

\begin{figure}
\centering
\includegraphics[width=0.65\textwidth]{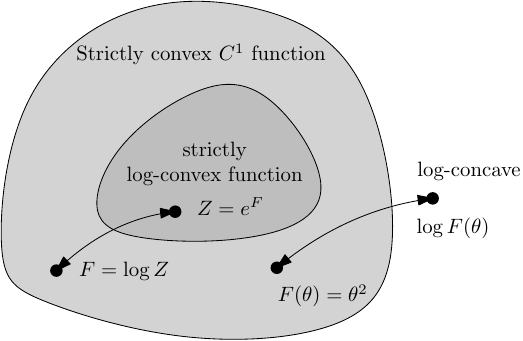}
\caption{Strictly log-convex functions form a proper subset of strictly convex functions.\label{fig:cvxlogcvx}}
\end{figure}

The converse of Proposition~\ref{prop:logcvxcvx} is not necessarily true: Some convex functions are not log-convex, and thus the class of strictly log-convex functions is a proper subclass of strictly convex functions.
For example, $\theta^2$ is convex but log-concave since $(\log \theta^2)''=-\frac{2}{\theta^2}<0$ (Figure~\ref{fig:cvxlogcvx}).

\begin{Remark}
Since $Z=\exp(F)$ is strictly convex (Proposition~\ref{prop:logcvxcvx}), $F$ is exponentially convex.
\end{Remark}

\begin{Definition}\label{prop:pair}
The  cumulant function $F$ and partition function $Z$  of a regular exponential family  are both strictly convex and smooth functions inducing a pair of dually flat spaces with corresponding Bregman divergences~\cite{Bregman-1967} $B_F$ (i.e., $B_{\log Z}$) and $B_Z$ (i.e., $B_{\exp F}$):
\begin{eqnarray}
B_Z(\theta_1:\theta_2)&=&Z(\theta_1)-Z(\theta_2)-\inner{\theta_1-\theta_2}{\nabla Z(\theta_2)}\geq 0,\\
B_{\log Z}(\theta_1:\theta_2)&=& \log \left(\frac{Z(\theta_1)}{Z(\theta_2)}\right) -\Inner{\theta_1-\theta_2}{{\frac{\nabla Z(\theta_2)}{Z(\theta_2)}}}\geq 0,
\end{eqnarray}
 and a pair of families of skewed Jensen divergences $J_{F,\alpha}$ and $J_{Z,\alpha}$:
\begin{eqnarray}
J_{Z,\alpha}(\theta_1:\theta_2) &=& \alpha Z(\theta_1)+ (1-\alpha)Z(\theta_2)-Z(\alpha\theta_1+(1-\alpha)\theta_2) \geq 0,\\
J_{\log Z,\alpha}(\theta_1:\theta_2)&=&  \log \frac{Z(\theta_1)^\alpha Z(\theta_2)^{1-\alpha}}{Z(\alpha\theta_1+(1-\alpha)\theta_2)} \geq 0.
\end{eqnarray}
\end{Definition}
 
For a strictly convex function $F(\theta)$, we define the symmetric Jensen divergence:
$$
J_F(\theta_1,\theta_2)=J_{F,\frac{1}{2}}(\theta_1:\theta_2)=\frac{F(\theta_1)+F(\theta_2)}{2}-F\left(\frac{\theta_1+\theta_2}{2}\right).
$$

Let $\calB_\Theta$ denote the set of real-valued strictly convex and differentiable functions defined on an open set $\Theta$, called Bregman generators.
We may equivalently consider the set of   strictly concave and differentiable functions $G(\theta)$, and let $F(\theta)=-G(\theta)$: 
See~\cite{wong2018logarithmic} (Equation~1).

\begin{Remark}
The non-negativeness of the Bregman divergences for the cumulant and partition functions define the criteria for checking the strict convexity 
or log-convexity of a $C^1$ function:
\begin{eqnarray*}
\mbox{$F(\theta)$ is strictly convex} &\Leftrightarrow& \forall \theta_1\not=\theta_2, B_F(\theta_1:\theta_2)>0,\\
&\Leftrightarrow&  \forall \theta_1\not=\theta_2, F(\theta_1)> F(\theta_2)+\inner{\theta_1-\theta_2}{\nabla F(\theta)},
\end{eqnarray*}
and
\begin{eqnarray*}
\mbox{$Z(\theta)$ is strictly log-convex} &\Leftrightarrow& \forall \theta_1\not=\theta_2, B_{\log Z}(\theta_1:\theta_2)>0,\\
&\Leftrightarrow& \forall \theta_1\not=\theta_2, \log Z(\theta_1) > \log Z(\theta_2) + \Inner{\theta_1-\theta_2}{ \frac{\nabla Z(\theta_2)}{Z(\theta_2)} }.
\end{eqnarray*}
\end{Remark}

The forward Bregman divergence $B_F(\theta_1:\theta_2)$ and reverse Bregman divergence $B_F(\theta_2:\theta_1)$ can be unified with the $\alpha$-skewed Jensen divergences by rescaling $J_{F,\alpha}$, and allowing $\alpha$ to range in $\bbR$~\cite{zhang2004divergence,nielsen2011burbea}:

\begin{equation}
J_{F,\alpha}^s(\theta_1:\theta_2)=
\left\{
\begin{array}{ll}
\frac{1}{\alpha(1-\alpha)} J_{F,\alpha}(\theta_1:\theta_2), & \alpha\in\bbR\backslash\{0,1\},\\
B_F(\theta_1:\theta_2), & \alpha =0,\\
4\, J_F(\theta_1,\theta_2), & \alpha=\frac{1}{2},\\
B_F^*(\theta_1:\theta_2)=B_F(\theta_2:\theta_1), & \alpha =1.
\end{array}
\right.,
\end{equation}
where $B_F^*$ denote the reverse Bregman divergence obtained by swapping the parameter order (reference duality~\cite{zhang2004divergence}).

\begin{Remark}
Alternatively, we may rescale $J_F$ by a factor  $\kappa(\alpha)=\frac{1}{\alpha(1-\alpha) 4^{4\alpha(1-\alpha)}}$, i.e., $J_{F,\alpha}^{\bar{s}}(\theta_1:\theta_2)=\kappa(\alpha)\, J_{F,\alpha}(\theta_1:\theta_2)$ so that
$\kappa(\frac{1}{2})=1$ and $J_{F,\frac{1}{2}}^{\bar{s}}(\theta_1:\theta_2)=J_F(\theta_1,\theta_2)$.
\end{Remark}

Next, we first recall the connections between these Jensen and Bregman divergences which are divergences between parameters and statistical divergence counterparts between probability densities in 
\S\ref{sec:divF}.
We will then introduce the novel connections between these parameter divergences and $\alpha$-divergences between unnormalized densities in 
\S\ref{sec:divZ}.

%%%%%%%%%
\section{Divergences related to the cumulant function}\label{sec:divF}
%%%%%%%%%

Consider the scaled $\alpha$-skewed Bhattacharyya distances~\cite{kailath1967divergence,nielsen2011burbea} between two probability densities $p(x)$ 
and $q(x)$:

$$
D_{B,\alpha}^s(p:q)=-\frac{1}{\alpha(1-\alpha)}\log\int p^\alpha q^{1-\alpha}\dmu,\quad \alpha\in\bbR\backslash\{0,1\}.
$$

The scaled $\alpha$-skewed Bhattacharyya distances can also be interpreted as R\'enyi divergences~\cite{van2014renyi} scaled by $\frac{1}{\alpha}$:
$D_{B,\alpha}^s(p:q)=\frac{1}{\alpha} D_{R,\alpha}(p:q)$ where

$$
D_{R,\alpha}(p:q)=\frac{1}{\alpha-1}\log \int p^\alpha q^{1-\alpha}\dmu.
$$

The Bhattacharyya distance $D_B(p,q)=-\log \int \sqrt{pq}\dmu$ corresponds to one fourth of $D_{B,\frac{1}{2}}^s(p:q)$: $D_B(p,q)=\frac{1}{4}\, D_{B,\frac{1}{2}}^s(p:q)$.
Since $D_{B,\alpha}^s$ tend to the Kullback-Leibler divergence $D_\KL$ when $\alpha\rightarrow 1$ and
to the reverse Kullback-Leibler divergence ${D_\KL}^*$ when $\alpha\rightarrow 0$, we have
$$
D_{B,\alpha}^s(p:q)=\left\{
\begin{array}{ll}
-\frac{1}{\alpha(1-\alpha)}\,\log\int p^\alpha q^{1-\alpha}\dmu, & \alpha\in\bbR\backslash\{0,1\},\\
D_\KL(p:q), & \alpha=1,\\
4\, D_B(p,q) &\alpha=\frac{1}{2},\\
{D_\KL}^*(p:q)=D_\KL(q:p) & \alpha=0.
\end{array}
\right.
$$

When both probability densities belong to the same exponential family $\calE=\{p_\theta(x) \st \theta\in\Theta\}$ with cumulant $F(\theta)$, we have the following proposition:

\begin{Proposition}[\cite{nielsen2011burbea}]
The scaled $\alpha$-skewed Bhattacharyya distances 
between two probability densities $p_{\theta_1}$ and $p_{\theta_2}$ of an exponential family amounts to the scaled $\alpha$-skewed Jensen divergence between their natural parameters:
\begin{equation}\label{eq:sbhat}
D_{B,\alpha}^s(p_{\theta_1}:p_{\theta_2})=J_{F,\alpha}^s(\theta_1,\theta_2).
\end{equation}
\end{Proposition}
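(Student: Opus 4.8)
The plan is to compute the inner integral $\int p_{\theta_1}^\alpha p_{\theta_2}^{1-\alpha}\dmu$ directly, exploiting the multiplicative structure of exponential family densities, and then to recognize the result in terms of the cumulant function $F$. Since we work with natural exponential families (so that $k(x)=0$ and $h(x)=1$), each density has the canonical subtractive form $p_\theta(x)=\exp(\inner{\theta}{t(x)}-F(\theta))$ of Eq.~\ref{eq:subnorm}.

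First I would raise the two densities to the powers $\alpha$ and $1-\alpha$, multiply them, and collect the exponents:
\begin{align*}
p_{\theta_1}^\alpha(x)\, p_{\theta_2}^{1-\alpha}(x)
&= \exp\left(\inner{\alpha\theta_1+(1-\alpha)\theta_2}{t(x)} - \alpha F(\theta_1)-(1-\alpha)F(\theta_2)\right).
\end{align*}
The crucial observation is that the coefficient of $t(x)$ is exactly the convex combination $\bartheta=\alpha\theta_1+(1-\alpha)\theta_2$, which by Proposition~\ref{prop:naturalconvex} lies in the convex natural parameter space $\Theta$ whenever $\alpha\in(0,1)$, so the remaining integral converges. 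I would then factor out the $x$-independent term and identify the surviving integral as the partition function evaluated at $\bartheta$:
\begin{align*}
\int p_{\theta_1}^\alpha p_{\theta_2}^{1-\alpha}\dmu
&= \exp\left(-\alpha F(\theta_1)-(1-\alpha)F(\theta_2)\right)\, Z(\bartheta)\\
&= \exp\left(F(\bartheta)-\alpha F(\theta_1)-(1-\alpha)F(\theta_2)\right),
\end{align*}
using $Z(\bartheta)=\exp(F(\bartheta))$.

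Taking the negative logarithm yields precisely the unscaled skewed Jensen divergence $J_{F,\alpha}(\theta_1:\theta_2)=\alpha F(\theta_1)+(1-\alpha)F(\theta_2)-F(\bartheta)$, and multiplying by the scaling factor $\frac{1}{\alpha(1-\alpha)}$ delivers $D_{B,\alpha}^s(p_{\theta_1}:p_{\theta_2})=J_{F,\alpha}^s(\theta_1:\theta_2)$, the claimed identity for $\alpha\in\bbR\backslash\{0,1\}$.

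I expect the only genuine subtlety to be the integrability/domain question: the argument requires $\bartheta\in\Theta$ so that $Z(\bartheta)<\infty$. For $\alpha\in(0,1)$ this is immediate from the convexity of $\Theta$; extending to the limit cases $\alpha\to 0$ and $\alpha\to 1$ (which recover the sided Kullback-Leibler divergences as reverse Bregman divergences) is then a routine passage to the limit in the scaled expression once the core equality is in hand. Everything else is a direct manipulation of exponentials and requires no further machinery.
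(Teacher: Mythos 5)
Your proposal is correct and follows essentially the same route as the paper: both compute $\int p_{\theta_1}^\alpha p_{\theta_2}^{1-\alpha}\dmu$ by collecting the exponents into the convex combination $\bartheta=\alpha\theta_1+(1-\alpha)\theta_2$ and identifying the surviving integral as $Z(\bartheta)=\exp(F(\bartheta))$ (the paper phrases this by inserting $\exp(F(\bartheta))\exp(-F(\bartheta))=1$ so that the integrand becomes the normalized density $p_{\bartheta}$, which is the same manipulation). Your explicit remark that $\bartheta\in\Theta$ is guaranteed by convexity only for $\alpha\in(0,1)$ is a welcome precision the paper glosses over.
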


\begin{proof}
The proof follows by first considering the $\alpha$-skewed Bhattacharyya similarity coefficient $\rho_\alpha(p,q)=\int p^\alpha q^{1-\alpha}\dmu$.

\begin{eqnarray*}
\rho_\alpha(p_{\theta_1}:p_{\theta_2}) &=& \int \exp\left(\inner{\theta_1}{x}-F(\theta_1)\right)^\alpha\, \exp\left(\inner{\theta_2}{x}-F(\theta_2)\right)^{1-\alpha}\dmu,\\
&=& \int \exp(\inner{\alpha\theta_1+(1-\alpha)\theta_2)}{x}) \, \exp\left(-(\alpha F(\theta_1)+(1-\alpha)F(\theta_2))\right)\dmu.
\end{eqnarray*}
Multiplying by $\exp(F(\alpha\theta_1+(1-\alpha)\theta_2))\, \exp(-F(\alpha\theta_1+(1-\alpha)\theta_2)) = \exp(0)=1$ the last equation, we get with $\bar\theta=\alpha\theta_1+(1-\alpha)\theta_2$:
$$
\rho_\alpha(p_{\theta_1}:p_{\theta_2})= \exp(-(\alpha F(\theta_1)+(1-\alpha)F(\theta_2)) \exp(F(\bar\theta)) \int \exp(\inner{\bar\theta}{x}-F(\bar\theta)) \dmu.
$$
Since $\bar\theta\in\Theta$, we have $\int \exp(\inner{\bar\theta}{x}-F(\bar\theta)) \dmu=1$, and therefore
$\rho_\alpha(p_{\theta_1}:p_{\theta_2})=\exp(-J_{F,\alpha}(\theta_1:\theta_2))$.
\end{proof}

For the practitioners in machine learning, it is well-known that the Kullback--Leibler divergence between two probability densities $p_{\theta_1}$ and $p_{\theta_2}$ of an exponential family  amounts to a Bregman divergence for the cumulant generator on swapped parameter order  (e.g., \cite{azoury2001relative} and~\cite{amari-1985}):
$$
D_\KL(p_{\theta_1}:p_{\theta_2}) = B_F(\theta_2:\theta_1).
$$

This is a particular instance of Eq.~\ref{eq:sbhat} obtained for $\alpha=1$:
$$
D_{B,1}^s(p_{\theta_1}:p_{\theta_2})=J_{F,1}^s(\theta_1,\theta_2).
$$

This formula has been further generalized in~\cite{nielsen2022statistical} by considering truncations of exponential family densities:
Let $\calX_1\subseteq\calX_2\subseteq\calX$ and $\calE_1=\{ 1_{\calX_1}(x)\, p_\theta(x)\}$, $\calE_2=\{1_{\calX_2}(x)\, q_{\theta'}(x)\}$ be two truncated families of $\calX$ with cumulant functions 
$$
F_1(\theta)=\log\int_{\calX_1} \exp(\inner{t(x)}{\theta})\dmu,
$$
 and 
$$
F_2(\theta')=\log\int_{\calX_2} \exp(\inner{t(x)}{\theta})\dmu\geq F_1(\theta').
$$
 Then we have
\begin{eqnarray*}
D_\KL(p_{\theta_1}:q_{\theta_2'}) &=& B_{F_2,F_1}(\theta_2':\theta_1),\\
&=& F_2(\theta_2')-F_1(\theta_1)-\inner{\theta_2'-\theta_1}{\nabla F_1(\theta_1)}.
\end{eqnarray*}

Truncated exponential families are normalized exponential families which may not be regular~\cite{del1994singly} (i.e., parameter space $\Theta$ may not be open).

%%%%%%%%%
\section{Divergences related to the partition function}\label{sec:divZ}
%%%%%%%%%

The squared Hellinger distance~\cite{IG-2016} between two positive potentially unnormalized densities $\tp$ and $\tq$ is defined by
\begin{eqnarray*}
D_H^2(\tp,\tq) &=& \frac{1}{2} \int (\sqrt{\tp}-\sqrt{\tq})^2\, \dmu,\\
&=& \frac{\int\tp\dmu+\int \tq\dmu}{2}-\int\sqrt{\tp\tq}\dmu
\end{eqnarray*}
Notice that the Hellinger divergence can be interpreted as the integral of the difference 
between the arithmetic mean $A(\tp,\tq)=\frac{\tp+\tq}{2}$ minus the geometric mean $G(\tp,\tq)=\sqrt{\tp\tq}$ of the densities:
 $D_H^2(\tp,\tq)=\int (A(\tp,\tq)-G(\tp,\tq)) \dmu$ . This also proves that $D_H(\tp,\tq)\geq 0$ since $A\geq G$.
The Hellinger distance $D_H$ satisfies the metric axioms of distances.

When considering unnormalized densities $\tp_{\theta_1}=\exp(\inner{t(x)}{\theta_1})$ and $\tp_{\theta_2}=\exp(\inner{t(x)}{\theta_2})$ of an exponential family $\calE$ with partition function $Z(\theta)=\int \tp_\theta\,\dmu$, we get
\begin{equation}
D_H^2(\tp_{\theta_1},\tp_{\theta_2})=\frac{Z(\theta_1)+Z(\theta_2)}{2}-Z\left(\frac{\theta_1+\theta_2}{2}\right)=J_Z(\theta_1,\theta_2), \label{eq:JZEF}
\end{equation}
since $\sqrt{\tp_{\theta_1}\tp_{\theta_2}}=\tp_{\frac{\theta_1+\theta_2}{2}}$.

The Kullback-Leibler divergence~\cite{IG-2016} extended to two positive densities $\tp$ and $\tq$ is defined by
\begin{equation}
D_\KL(\tp:\tq)=\int \left(\tp\log\frac{\tp}{\tq} +\tq-\tp\right)\dmu.
\end{equation} 
When considering unnormalized densities $\tp_{\theta_1}$ and $\tp_{\theta_2}$ of $\calE$, we get
\begin{eqnarray}
D_\KL(\tp_{\theta_1}:\tp_{\theta_2}) &=& \int \left(\tp_{\theta_1}(x)\log\frac{\tp_{\theta_1}(x)}{\tp_{\theta_2}(x)}+\tp_{\theta_2}(x)-\tp_{\theta_1}(x)\right)\dmu(x),\\
&=& \int \left( e^{\inner{t(x)}{\theta_1}} \inner{\theta_1-\theta_2}{t(x)} +  e^{\inner{t(x)}{\theta_2}}-e^{\inner{t(x)}{\theta_1}} \right)\, \dmu(x),\\
&=& \inner{\int t(x) e^{\inner{t(x)}{\theta_1}}\dmu(x) }{ \theta_1-\theta_2}+Z(\theta_2)-Z(\theta_1),\\
&=& \inner{\theta_1-\theta_2}{\nabla Z(\theta_1)}+Z(\theta_2)-Z(\theta_1)= B_{Z}(\theta_2:\theta_1), \label{eq:ekluef}
\end{eqnarray}
since $\nabla Z(\theta)=\int t(x)\tp_\theta(x)\dmu(x)$.
Let $D_\KL^*(\tp:\tq)=D_\KL(\tq:\tp)$ denote the reverse KLD.

More generally, the family of $\alpha$-divergences~\cite{IG-2016} between unnormalized densities $\tp$ and $\tq$ is defined for $\alpha\in\bbR$ by:
$$
D_\alpha(\tp:\tq)=
\left\{
\begin{array}{ll}
\frac{1}{\alpha(1-\alpha)} \int\left(\alpha\tp+(1-\alpha)\tq-\tp^\alpha\tq^{1-\alpha}\right)\dmu, & \alpha\not\in\{0,1\}\\
D_\KL^*(\tp:\tq)=D_\KL(\tq:\tp) & \alpha =0,\\
4\, D_H^2(\tp,\tq) & \alpha=\frac{1}{2},\\
D_\KL(\tp:\tq) & \alpha=1.
\end{array}
\right.
$$
We have $D_\alpha^*(\tp:\tq)=D_\alpha(\tq:\tp)=D_{1-\alpha}(\tp:\tq)$. and the $\alpha$-divergences are homogeneous divergences of degree $1$:
For all $\lambda>0$, $D_\alpha(\lambda\tq:\lambda\tp)=\lambda \, D_\alpha(\tq:\tp)$.
Moreoever, since $\alpha\tp+(1-\alpha)\tq-\tp^\alpha\tq^{1-\alpha}$ can be expressed as the difference of the weighted arithmetic minus the
 geometric means $A(\tp,\tq;\alpha;1-\alpha)-G(\tp,\tq;\alpha;1-\alpha)$, it follows from the arithmetic-geometric mean inequality that we have $D_\alpha(\tp:\tq)\geq 0$.

When considering unnormalized densities $\tp_{\theta_1}$ and $\tp_{\theta_2}$ of $\calE$, we get
$$
D_\alpha(\tp_{\theta_1}:\tp_{\theta_2})=
\left\{
\begin{array}{ll}
\frac{1}{\alpha(1-\alpha)} J_{Z,\alpha}(\theta_1:\theta_2), & \alpha\not\in\{0,1\}\\
B_Z(\theta_1:\theta_2)    & \alpha =0,\\
4\, J_Z(\theta_1,\theta_2) & \alpha=\frac{1}{2},\\
B_Z^*(\theta_1:\theta_2)=B_Z(\theta_2:\theta_1) & \alpha=1
\end{array}
\right.
$$

\begin{Proposition}\label{prop:Z}
The  $\alpha$-divergences between unnormalized densities of an exponential family amounts to scaled $\alpha$-Jensen divergences between their natural parameters for the partition function:
$$
D_\alpha(\tp_{\theta_1}:\tp_{\theta_2})= J_{Z,\alpha}^s(\theta_1:\theta_2).
$$
When $\alpha\in\{0,1\}$, the oriented Kullback-Leibler divergences between unnormalized exponential familiy densities amounts to reverse Bregman divergences on their corresponding natural parameters for the partition function:
$$
D_\KL(\tp:\tq)=B_Z(\theta_2:\theta_1).
$$ 
\end{Proposition}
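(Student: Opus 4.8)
The plan is to substitute the exponential-family form of the unnormalized densities directly into the definition of $D_\alpha$ and to exploit the fact that the weighted geometric mean of two such densities is again an unnormalized density of the same family. Concretely, for $\alpha\in\bbR\backslash\{0,1\}$ I would expand the three integrals appearing in $D_\alpha(\tp_{\theta_1}:\tp_{\theta_2})=\frac{1}{\alpha(1-\alpha)}\int(\alpha\tp_{\theta_1}+(1-\alpha)\tp_{\theta_2}-\tp_{\theta_1}^\alpha\tp_{\theta_2}^{1-\alpha})\dmu$ term by term.

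The two arithmetic-mean integrals are immediate: $\int\tp_{\theta_i}\dmu=Z(\theta_i)$ by definition of the partition function, so together they contribute $\alpha Z(\theta_1)+(1-\alpha)Z(\theta_2)$. The key step is the geometric-mean term. Since $\tp_\theta(x)=\exp(\inner{t(x)}{\theta})$ is log-linear in $\theta$, one has $\tp_{\theta_1}^\alpha(x)\,\tp_{\theta_2}^{1-\alpha}(x)=\exp(\inner{t(x)}{\alpha\theta_1+(1-\alpha)\theta_2})=\tp_{\bar\theta}(x)$ with $\bar\theta=\alpha\theta_1+(1-\alpha)\theta_2$. Integrating gives $Z(\bar\theta)=Z(\alpha\theta_1+(1-\alpha)\theta_2)$, which is finite because $\bar\theta\in\Theta$ by convexity of the natural parameter space (Proposition~\ref{prop:naturalconvex}). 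Assembling the three pieces yields $D_\alpha(\tp_{\theta_1}:\tp_{\theta_2})=\frac{1}{\alpha(1-\alpha)}J_{Z,\alpha}(\theta_1:\theta_2)=J_{Z,\alpha}^s(\theta_1:\theta_2)$, which is exactly the interior claim.

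For the two boundary cases I would treat $\alpha=1$ separately, since the forward KL computation is already carried out in the derivation leading to Eq.~\ref{eq:ekluef}, giving $D_\KL(\tp_{\theta_1}:\tp_{\theta_2})=B_Z(\theta_2:\theta_1)$; here the gradient identity $\nabla Z(\theta)=\int t(x)\tp_\theta(x)\dmu(x)$ converts the integral into Bregman form. The case $\alpha=0$ then follows from the reverse-duality relation $D_0(\tp:\tq)=D_\KL^*(\tp:\tq)=D_\KL(\tq:\tp)$ by applying the $\alpha=1$ result with the parameter order swapped, yielding $B_Z(\theta_1:\theta_2)$.

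The computation is essentially routine, so there is no deep obstacle; the only points requiring care are (i) justifying that the geometric-mean integral is finite, which is precisely where convexity of $\Theta$ enters, and (ii) matching the parameter orientation of the Bregman divergences at the endpoints, since the reverse KL flips $\theta_1$ and $\theta_2$. Alternatively, one could avoid special-casing altogether by verifying that $\frac{1}{\alpha(1-\alpha)}J_{Z,\alpha}(\theta_1:\theta_2)$ converges to $B_Z(\theta_2:\theta_1)$ as $\alpha\to 1$ and to $B_Z(\theta_1:\theta_2)$ as $\alpha\to 0$ through a first-order Taylor expansion of $Z$ about $\theta_1$ (respectively $\theta_2$), recovering both boundary identities as limits of the interior formula.
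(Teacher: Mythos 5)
Your argument is correct and follows essentially the same route as the paper: integrate the three terms of $D_\alpha$ separately, use $\tp_{\theta_1}^\alpha\tp_{\theta_2}^{1-\alpha}=\tp_{\alpha\theta_1+(1-\alpha)\theta_2}$ to identify the geometric-mean integral with $Z(\alpha\theta_1+(1-\alpha)\theta_2)$, and handle the endpoints via the direct KL computation already given in Eq.~\ref{eq:ekluef} together with reverse duality. Your added remarks on finiteness of $Z(\bar\theta)$ via convexity of $\Theta$ (valid for $\alpha\in(0,1)$) and on recovering the endpoints as limits are sound supplements but not departures from the paper's proof.
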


\begin{proof}
For $\alpha\not\in\{0,1\}$, consider
$$
D_\alpha(\tp_{\theta_1}:\tp_{\theta_2}) = \frac{1}{\alpha(1-\alpha)} \int\left(\alpha\tp_{\theta_1}+(1-\alpha)\tp_{\theta_2}-\tp_{\theta_1}^\alpha\tp_{\theta_2}^{1-\alpha}\right)\dmu.
$$
We have $\int \alpha\tp_{\theta_1}\, \dmu=\alpha Z(\theta_1)$, $\int (1-\alpha)\tp_{\theta_2}\, \dmu=(1-\alpha)Z(\theta_2)$,
and $\int \tp_{\theta_1}^\alpha\tp_{\theta_2}^{1-\alpha}\dmu=\int \tp_{\alpha\theta_1+(1-\alpha)\theta_2}\dmu=Z(\alpha\theta_1+(1-\alpha)\theta_2)$.
It follows that $D_\alpha(\tp_{\theta_1}:\tp_{\theta_2})=\frac{1}{\alpha(1-\alpha)} J_{Z,\alpha}(\theta_1:\theta_2)== J_{Z,\alpha}^s(\theta_1:\theta_2)$.
\end{proof}

Notice that the KLD extended to unnormalized densities can be written as a generalized relative entropy, i.e., 
obtained as the difference of the extended cross-entropy minus the extended entropy (self cross-entropy):
\begin{eqnarray*}
D_\KL(\tp:\tq) &=& H^\times(\tp:\tq)-H(\tp),\\
&=& \int \left(\tp\log\frac{\tp}{\tq}+\tq-\tp\right)\dmu,
\end{eqnarray*}
with
$$
H^\times(\tp:\tq)=\int \left( \tp(x)\log\frac{1}{\tq(x)} + \tq(x)\right) \dmu(x) -1,
$$ 
and
$$
H(\tp)=H^\times(\tp:\tp)=\int \left( \tp(x)\log\frac{1}{\tp(x)} + \tp(x)\right) \dmu(x) -1. 
$$

\begin{Remark}
In general, let us consider two unnormalized positive densities $\tp(x)$ and $\tq(x)$, 
and let  $p(x)=\frac{\tp(x)}{Z_p}$ and $q(x)=\frac{\tq(x)}{Z_q}$ denote their corresponding normalized densities (with normalizing factors $Z_p=\int \tp\,\dmu$ and 
$Z_q=\int \tq\,\dmu$). 
The KLD between $\tp$ and $\tq$ can be expressed using the KLD between their normalized densities and normalizing factors:
\begin{equation}\label{eq:klekl}
D_\KL(\tp:\tq)=   Z_p\, \left(D_\KL(p:q) +\log \frac{Z_p}{Z_q} \right) + Z_q -Z_p.
\end{equation}
Similarly, we have
\begin{eqnarray}
H^\times(\tp:\tq) &=& Z_p\, H^\times(p:q)-Z_p\log Z_q+Z_q-1,\\
H(\tp) &=& Z_p\, H(p)-Z_p\log Z_p + Z_p-1,
\end{eqnarray}
and $D_\KL(\tp:\tq)=H^\times(\tp:\tq)-H(\tp)$.
\end{Remark}

Notice that Eq.~\ref{eq:klekl} allows us to derive the following identity between $B_Z$ and $B_F$:
\begin{eqnarray}
B_Z(\theta_2:\theta_1) &=& Z(\theta_1)\, B_F(\theta_2:\theta_1) +Z(\theta_1)\log\frac{Z(\theta_1)}{Z(\theta_2)}+Z(\theta_2)-Z(\theta_1),\\
&=& \exp(F(\theta_1))\, B_F(\theta_2:\theta_1) +(\exp F(\theta_1))(F(\theta_1)-F(\theta_2)) + \exp(F(\theta_2))-\exp(F(\theta_1)).
\end{eqnarray}
Let $D_\skl(a:b)=a\log\frac{a}{b}+b-a$ be the scalar KLD for $a>0$ and $b>0$. 
Then we rewrite Eq.~\ref{eq:klekl} as
$$
D_\KL(\tp:\tq)=   Z_p\, D_\KL(p:q) + D_\skl(Z_p:Z_q),
$$
and we have:
$$
B_Z(\theta_2:\theta_1)=Z(\theta_1)\, B_F(\theta_2:\theta_1)+D_\skl(Z(\theta_1):Z(\theta_2)).
$$
The KLD between unnormalized densities $\tp$ and $\tq$ with support $\calX$ can also be written as a definite integral of a scalar Bregman divergence:
$$
D_\KL(\tp:\tq)=\int_{\calX} D_\skl(\tp(x):\tq(x))\, \dmu(x)=\int_{\calX} B_{f_\skl}(\tp(x):\tq(x))\,\dmu(x),
$$
where $f_\skl(x)=x\log x-x$. Since $B_{f_\skl}(a:b)\geq 0 \forall a>0, b>0$, we deduce that $D_\KL(\tp:\tq)\geq 0$ with equality iff $\tp(x)=\tq(x)$ $\mu$-almost everywhere.

Notice that $B_Z(\theta_2:\theta_1) = Z(\theta_1)\, B_F(\theta_2:\theta_1) + D_\skl(Z(\theta_1):Z(\theta_2))$ can be interpreted as the sum of two divergences: a conformal Bregman divergence with a scalar Bregman divergence.

\begin{Remark}
Consider the KLD between the normalized $p_{\theta_1}$ and unnormalized $\tp_{\theta_2}$ densities of a same exponential family.
We have
\begin{eqnarray}
D_\KL(p_{\theta_1}:\tp_{\theta_2}) &=& B_F(\theta_2:\theta_1)-\log Z(\theta_2)+Z(\theta_2)-1,\\
&=& Z(\theta_2)-1-F(\theta_1)-\inner{\theta_2-\theta_1}{\nabla F(\theta_2)},\nonumber\\
&=& B_{Z-1,F}(\theta_2-\theta_1).
\end{eqnarray}
Divergence $B_{Z-1,F}$ is a duo Bregman pseudo-divergence~\cite{nielsen2022statistical}:
$$
B_{F_1,F_2}(\theta_1:\theta_2)=F_1(\theta_1)-F_2(\theta_2)-\inner{\theta_1-\theta_2}{\nabla F_2(\theta_2)},
$$
for $F_1$ and $F_2$ two strictly convex and smooth functions such that $F_1\geq F_2$.
Indeed, we check that both generators $F_1(\theta)=Z(\theta)-1$ and $F_2(\theta)=F(\theta)$ are Bregman generators, and we have $F_1(\theta)\geq F_2(\theta)$ since $e^x\geq x+1$ for all $x$ (with equality when $x=0$), i.e., $Z(\theta)-1\geq F(\theta)$.

More generally, the $\alpha$-divergences between $p_{\theta_1}$ and $\tp_{\theta_2}$ can be written as
\begin{equation}
D_\alpha(p_{\theta_1}:\tp_{\theta_2}) = \frac{1}{\alpha(1-\alpha)} \left(
\alpha Z(\theta_1)+(1-\alpha)-\frac{Z(\alpha\theta_1+(1-\alpha)\theta_2)}{Z(\theta_2)}.
\right),
\end{equation}
and the (signed) $\alpha$-skewed Bhattacharyya distances are given by
$$
D_{B,\alpha}(p_{\theta_1}:\tp_{\theta_2})=\log Z(\theta_2)-\log Z(\alpha\theta_1+(1-\alpha)\theta_2).
$$

\end{Remark}

Let us illustrate Proposition~\ref{prop:Z} with some examples:

\begin{Example}
Consider the family of exponential distributions $\calE=\{p_\lambda(x)=1_{x\geq 0}\, \lambda\exp(-\lambda x)\}$.
$\calE$ is an exponential family with natural parameter $\theta=\lambda$ and parameter space $\Theta=\bbR{>0}$, and sufficient statistic $t(x)=-x$.
The partition function is $Z(\theta)=\frac{1}{\theta}$ with $Z'(\theta)=-\frac{1}{\theta^2}$ and $Z''(\theta)=\frac{2}{\lambda^3}>0$.
The cumulant function is $F(\theta)=\log Z(\theta)=-\log\theta$.
with moment parameter $\eta=E_{p_\lambda}[t(x)]=F'(\theta)=-\frac{1}{\theta}$.
The $\alpha$-divergences between two unnormalized exponential distributions are:
\begin{equation}
D_\alpha(\tp_{\lambda_1}:\tp_{\lambda_2}) = 
\left\{
\begin{array}{ll}
\frac{1}{\alpha(1-\alpha)} J_{Z,\alpha}(\theta_1:\theta_2)=\frac{(\lambda_1-\lambda_2)^2)}{\alpha \lambda_1^2\lambda_2+(1-\alpha)\lambda_1\lambda_2^2} & \alpha\not\in\{0,1\}\\
D_\KL(\tp_{\lambda_2}:\tp_{\lambda_1})=B_Z(\theta_1:\theta_2) = \frac{(\lambda_1-\lambda_2)^2}{\lambda_1\lambda_2^2}   & \alpha =0,\\
4\, J_Z(\theta_1,\theta_2)= \frac{(\lambda_1-\lambda_2)^2}{2(\lambda_1\lambda_2^2+\lambda_1^2\lambda_2)} & \alpha=\frac{1}{2},\\
D_\KL(\tp_{\lambda_1}:\tp_{\lambda_2})=B_Z(\theta_2:\theta_1) = \frac{(\lambda_1-\lambda_2)^2}{\lambda_2\lambda_1^2}   & \alpha=1
\end{array}
\right.
\end{equation}
\end{Example}

\begin{Example}
Consider the family of univariate centered normal distributions with $\tp_{\sigma^2}(x)\propto \exp(-\frac{x^2}{2\sigma^2})$ and partition function $Z(\sigma^2)=\sqrt{2\pi\sigma^2}$ so that $p_{\sigma^2}(x)=\frac{1}{Z(\sigma^2)}\, \tp_{\sigma^2}(x) = \frac{1}{\sqrt{2\pi\sigma^2}} \exp(-\frac{x^2}{2\sigma^2})$.
We have natural parameter $\theta=\frac{1}{\sigma^2}\in\Theta=\bbR_{>0}$ and sufficient statistic $t(x)=-\frac{x^2}{2}$.
The partition function expressed with  the natural parameter is $Z(\theta)=\sqrt{\frac{2\pi}{\theta}}$ with $Z'(\theta)=-\sqrt{\frac{\pi}{2}} \theta^{-\frac{3}{2}}$ and 
$Z''(\theta)=\frac{3\sqrt{\pi}}{2^{\frac{3}{2}}} \theta^{-\frac{5}{2}}>0$ (strictly convex on $\Theta$).
The unnormalized KLD between $\tp_{\sigma^2_1}$ and $\tp_{\sigma^2_2}$ is
$$
D_\KL(\tp_{\sigma^2_1}:\tp_{\sigma^2_2})= B_Z(\theta_2:\theta_1) =\sqrt{\frac{\pi}{2}}\, \left(2\sigma_2-3\sigma_1+\frac{\sigma_1^3}{\sigma_2^2}\right).
$$
We check that we $D_\KL(\tp_{\sigma^2}:\tp_{\sigma^2})=0$.

For the Hellinger divergence, we have 
$$
D_H^2(\tp_{\sigma^2_1}:\tp_{\sigma^2_2})= J_Z(\theta_1,\theta_2) = \sqrt{\frac{\pi}{2}}(\sigma_1+\sigma_2)-2\sqrt{\pi}\frac{\sigma_1\sigma_2}{\sqrt{\sigma_1^2+\sigma_2^2}},
$$
and we check that $D_H(\tp_{\sigma^2}:\tp_{\sigma^2})=0$.

Consider the family of $d$-variate case of centered normal distributions with unnormalized density: 
$$
\tp_{\Sigma}(x)\propto \exp(-\frac{1}{2}x^\top \Sigma^{-1} x) = \exp\left(-\frac{1}{2}\tr(x^\top \Sigma^{-1} x) \right)
=\exp\left(-\frac{1}{2}\tr(x x^\top \Sigma^{-1} ) \right),
$$ 
obtained using the matrix trace cyclic property, and where $\Sigma$ is the covariance matrix.
We have $\theta=\Sigma^{-1}$ (precision matrix) and $\Theta=\Sym_{++}(d)$ for $t(x)=-\frac{1}{2}xx^\top$ with the matrix inner product $\inner{A}{B}=\tr(A^\top B)$.
The partition function $Z(\Sigma)=(2\pi)^{\frac{d}{2}} \sqrt{\det(\Sigma)}$ expressed with the natural parameter is
$Z(\theta)=(2\pi)^{\frac{d}{2}} \sqrt{\frac{1}{\det(\theta)}}$.
This is a convex function with 
$$
\nabla Z(\theta)=-\frac{1}{2} (2\pi)^{\frac{d}{2}} \frac{\nabla_\theta \det(\theta)}{\det(\theta)^{\frac{3}{2}}}=
-\frac{1}{2} (2\pi)^{\frac{d}{2}} \frac{\theta^{-1}}{\det(\theta)^{\frac{1}{2}}},
$$
since $\nabla_\theta \det(\theta)=\det(\theta)\theta^{-\top}$ using matrix calculus.

Consider now the family of univariate normal distributions 
$$
\calE=\left\{p_{\mu,\sigma^2}(x)=\frac{1}{\sqrt{2\pi\sigma^2}} \exp\left( -\frac{1}{2} \left(\frac{x-\mu}{\sigma}\right)^2\right) \right\}.
$$
Let $\theta=\left(\theta_1=\frac{1}{\sigma_2},\theta_2=\frac{\mu}{\sigma^2}\right)$ and 
$$
Z(\theta_1,\theta_2)=\sqrt{\frac{2\pi}{\theta_1}}\, \exp\left(\frac{1}{2}\frac{\theta_2^2}{\theta_1}\right).
$$
The unnormalized densities are $\tp_\theta(x)=\exp\left(-\frac{\theta_1x^2}{2}+x\theta_2\right)$.
We have 
$$
\nabla Z(\theta)=\left[\begin{array}{ll}
\sqrt{\frac{\pi}{2}} \frac{(\theta_1+\theta_2^2)\exp\left(\frac{\theta_2^2}{2\theta_1}\right)}{\theta_1^{\frac{5}{2}}}\\
\sqrt{2\pi}\frac{\theta_2\exp\left(\frac{\theta_2^2}{2\theta_1}\right)}{\theta_1^{\frac{3}{2}}}
\end{array}\right].
$$
It follows that $D_\KL[\tp_{\theta}:\tp_{\theta'}]=B_Z(\theta':\theta)$.
\end{Example}

%%%%%%%%%
\section{Deforming convex functions and their induced dually flat spaces}\label{sec:deforming}
%%%%%%%%%

\subsection{Comparative convexity}
%%%%%%%%%%%
The log-convexity can be interpreted as a special case of comparative convexity
with respect to a pair $(M,N)$ of comparable weighted means~\cite{niculescu2006convex}:

A function $Z$ is $(M,N)$-convex if and only if for $\alpha\in [0,1]$, we have 
\begin{equation}
Z(M(x,y;\alpha,1-\alpha)) \leq N(Z(x),Z(y);\alpha,1-\alpha),
\end{equation}
and strictly $(M,N)$-convex iff we have strict inequality for $\alpha\in(0,1)$ and $x\not=y$.
Furthermore, a function $Z$ is (strictly) $(M,N)$-concave if $-Z$ is (strictly) $(M,N)$-convex.

Log-convexity corresponds to $(A,G)$-convexity, i.e., convexity with respect to the weighted arithmetic and geometric means defined respectively by 
$A(x,y;\alpha,1-\alpha)=\alpha x+(1-\alpha)y$ and $G(x,y;\alpha,1-\alpha)=x^\alpha y^{1-\alpha}$.
Ordinary convexity is $(A,A)$-convexity.
 
A weighted quasi-arithmetic mean~\cite{Kolmogorov-1930} (also called Kolmogorov-Nagumo mean~\cite{komori2021unified}) is defined for a continuous and strictly increasing function $h$ by
$$
M_h(x,y;\alpha,1-\alpha)=h^{-1}(\alpha h(x)+(1-\alpha)h(x)).
$$
We let $M_h(x,y)=M_h\left(x,y;\frac{1}{2},\frac{1}{2}\right)$.
Quasi-arithmetic means include the arithmetic mean obtained for $h(u)=\id(u)=u$ and the geometric mean for $h(u)=\log(u)$, and
more generally, power means 
$$
M_p(x,y;\alpha,1-\alpha)=\left(\alpha x^p+(1-\alpha) y^p\right)^{\frac{1}{p}}=M_{h_p}(x,y;\alpha,1-\alpha),\quad p\not=0
$$
which are quasi-arithmetic means obtained for the family of generators $h_p(u)=\frac{u^p-1}{p}$ with inverse $h_p^{-1}(u)=(1+up)^{\frac{1}{p}}$.
In the limit $p\rightarrow 0$, we have $M_0(x,y)=G(x,y)$ for the generator $\lim_{p\rightarrow 0}h_p(u)=h_0(u)=\log u$.

\begin{Proposition}[\cite{aczel1947generalization,nielsen2017generalizing}]
A function $Z(\theta)$ is strictly $(M_\rho,M_\tau)$-convex with respect to two strictly increasing smooth functions $\rho$ and $\tau$ if and only if
 the function $F=\tau\circ Z\circ \rho^{-1}$ is strictly convex.
\end{Proposition}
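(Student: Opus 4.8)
The plan is to reduce the claimed equivalence to ordinary convexity of $F$ by the change of variables $u=\rho(x)$, $v=\rho(y)$ together with an application of the strictly increasing function $\tau$, each of which is a monotone bijection and hence both preserves and reflects strict inequalities. First I would unfold the definition of strict $(M_\rho,M_\tau)$-convexity using the quasi-arithmetic mean formula $M_h(x,y;\alpha,1-\alpha)=h^{-1}(\alpha h(x)+(1-\alpha)h(y))$: the statement says that for all $x\neq y$ and all $\alpha\in(0,1)$,
$$Z\left(\rho^{-1}\left(\alpha\rho(x)+(1-\alpha)\rho(y)\right)\right) < \tau^{-1}\left(\alpha\,\tau(Z(x))+(1-\alpha)\,\tau(Z(y))\right).$$

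Next I would substitute $u=\rho(x)$ and $v=\rho(y)$. Since $\rho$ is strictly increasing and continuous, it is a bijection onto its image with strictly increasing continuous inverse, so $x=\rho^{-1}(u)$, $y=\rho^{-1}(v)$, and crucially $x\neq y \Leftrightarrow u\neq v$; thus the quantifier ``for all $x\neq y$'' matches exactly the quantifier ``for all $u\neq v$'' appearing in the definition of strict convexity. Applying the strictly increasing function $\tau$ to both sides of the displayed inequality (which preserves both its direction and its strictness, and cancels the outer $\tau^{-1}$ on the right) yields
$$\tau\left(Z\left(\rho^{-1}\left(\alpha u+(1-\alpha)v\right)\right)\right) < \alpha\,\tau(Z(\rho^{-1}(u)))+(1-\alpha)\,\tau(Z(\rho^{-1}(v))).$$
Recognizing $F=\tau\circ Z\circ\rho^{-1}$, the left-hand side is precisely $F(\alpha u+(1-\alpha)v)$ and the right-hand side is $\alpha F(u)+(1-\alpha)F(v)$, so the inequality reads
$$F(\alpha u+(1-\alpha)v)<\alpha F(u)+(1-\alpha)F(v),$$
which is strict convexity of $F$.

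Because the change of variables is a bijection and $\tau$ (with inverse $\tau^{-1}$) is a strictly increasing bijection, every step above is an equivalence rather than a one-way implication, so both directions of the ``if and only if'' follow simultaneously and the strictness is transported intact in each direction. There is no genuine analytic obstacle here; the proof is a chain of substitutions and monotone transformations. The only points demanding care are bookkeeping ones: verifying that the strict monotonicity of $\rho$ is what guarantees $x\neq y\Leftrightarrow u\neq v$ (so the ``$x\neq y$'' condition in strict $(M_\rho,M_\tau)$-convexity aligns with the ``$u\neq v$'' condition in strict convexity of $F$), and observing that only the strict monotonicity and continuity of $\rho$ and $\tau$ — not their full smoothness — are actually invoked for this equivalence, smoothness being reserved for the subsequent construction of the associated dually flat space.
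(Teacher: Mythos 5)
Your proof is correct. Note that the paper itself does not prove this proposition --- it is quoted with attribution to Acz\'el and to Nielsen--Nock --- so there is no in-text argument to compare against; your unfold-substitute-and-transport chain is the standard (and essentially the only) proof, and every step is indeed an equivalence because $\tau$ and $\rho$ are strictly increasing bijections onto their images. Your closing observations are also apt: only continuity and strict monotonicity are used, and the strict-monotonicity of $\rho$ is exactly what aligns the quantifier ``$x\neq y$'' with ``$u\neq v$''. The one point worth a sentence that you leave implicit is that for ``$F$ is strictly convex'' to parse, the domain $\rho(\Theta)$ of $F=\tau\circ Z\circ\rho^{-1}$ must be convex; in the scalar setting of the paper this is automatic, since the image of an interval under a continuous strictly increasing map is an interval, and the same hypothesis is already needed for $M_\rho(x,y;\alpha,1-\alpha)$ to lie in the domain of $Z$ in the first place.
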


Notice that the set of strictly increasing smooth functions form a non-abelian group with group operation the function composition, neutral element the identity function, and inverse element the functional inverse function.

Since log-convexity is $(A=M_\id,G=M_{\log})$-convexity, a function $Z$ is strictly log-convex iff $\log\circ Z\circ\id^{-1}=\log\circ Z$ is strictly convex.
We have
$$
Z=\tau^{-1}\circ F\circ \rho   \Leftrightarrow F=\tau\circ Z\circ \rho^{-1}.
$$

Starting from a given convex function $F(\theta)$, we can deform the function $F(\theta)$ to obtain a function $Z(\theta)$ using two strictly monotone functions $\tau$ and $\rho$: 
$Z(\theta)=\tau^{-1}(F(\rho(\theta)))$.

For a $(M_\rho,M_\tau)$-convex function $Z(\theta)$ which is also strictly convex, we can  define a pair of Bregman divergences $B_Z$ and $B_{F}$ with 
$F(\theta)=\tau(Z(\rho^{-1}(\theta)))$ and a corresponding pair of skewed Jensen divergences.

Thus we have the following generic deformation scheme:

{\footnotesize
$$
\underbrace{F=\tau\circ Z\circ \rho^{-1}}_{\mbox{$(M_{\rho^{-1}},M_{\tau^{-1}})$-convex when $Z$ is convex}} \xrightleftharpoons[\text{$(\rho^{-1},\tau^{-1})$-deformation}]{\text{$(\rho,\tau)$-deformation}} 
\underbrace{Z=\tau^{-1}\circ F\circ\rho}_{\mbox{$(M_\rho,M_\tau)$-convex when $F$ is convex}}
$$
}

In particular, when function $Z$ is deformed by strictly increasing power functions $h_{p_1}$ and $h_{p_2}$ for $p_1$ and $p_2$ in $\bbR$ as  
$$
Z_{p_1,p_2}=h_{p_2}\circ Z\circ h_{p_1}^{-1},
$$
$Z_{p_1,p_2}$  is strictly convex  when it is strictly $(M_{p_1},M_{p_2})$-convex, and thus induces corresponding Bregman and Jensen divergences.

\begin{Example}
Consider the partition function $Z(\theta)=\frac{1}{\theta}$ of the exponential distribution family ($\theta>0$ with $\Theta=\bbR_{>0}$).
Let $Z_p(\theta)=(h_p\circ Z)(\theta)=\frac{\theta^{-p}-1}{p}$. Then we have $Z_p''(\theta)=(1+p) \frac{1}{\theta^{2+p}}>0$ when $p> -1$.
Thus we can deform $Z$ smoothly by $Z_p$ while preserving convexity by ranging $p$ from $-1$ to $+\infty$.
We thus get a corresponding family of Bregman and Jensen divergences. 
\end{Example}

The proposed convex deformation by using quasi-arithmetic mean generators differs from the interpolation of convex functions using the technique of proximal average~\cite{bauschke2008proximal}.

Note that in~\cite{nielsen2017generalizing}, comparative convexity with respect to a pair of quasi-arithmetic means $(M_\rho,M_\tau)$ is used to define a
$(M_\rho,M_\tau)$-Bregman divergence which turns out to be equivalent to a conformal Bregman divergence on the $\rho$-embedding of the parameters. 

%%%%
\subsection{Dually flat spaces}
%%%%

We start by a refinement of the class of convex functions used  to generate dually flat spaces:

\begin{Definition}[Legendre type function~\cite{rockafellar1967conjugates}]\label{def:Legendre}
$(\Theta,F)$ is of Legendre type if the function $F:\Theta\rightarrow\bbR$ is strictly convex and differentiable with $\Theta\not=\emptyset$ and
\begin{equation}\label{eq:cond}
\lim_{\lambda\rightarrow 0} \frac{d}{\dlambda} F(\lambda\theta+(1-\lambda)\bar\theta)=-\infty,\quad \forall\theta\in\Theta, \forall\bar\theta\in\partial\Theta.
\end{equation}
\end{Definition}

Legendre-type functions $F(\Theta)$ admits a convex conjugate $F^*(\eta)$ via the
Legendre transform  $F^*(\eta)=\sup_{\theta\in\Theta} \inner{\theta}{\eta}-F(\theta)$:
$$
F^*(\eta)=\inner{\nabla F^{-1}(\eta)}{\eta}-F(\nabla F^{-1}(\eta)).
$$

A smooth and strictly convex function $(\Theta,F(\theta))$  of Legendre-type~\cite{rockafellar1967conjugates} induces a dually flat space~\cite{IG-2016} $\calM$, i.e., a smooth Hessian manifold~\cite{shima2007geometry} with a single global chart $(\Theta,\theta(\cdot))$~\cite{IG-2016}.
A canonical divergence $D(p:q)$  between two points $p$ and $q$ of $\calM$  is viewed as a single parameter contrast function~\cite{eguchi1985differential} $\calD(r_{pq})$ on the product manifold $\calM\times\calM$.
The canonical divergence and its dual canonical divergence $\calD^*(r_{qp})=\calD(r_{pq})$ can be expressed  equivalently either as dual Bregman divergences or as dual Fenchel-Young divergences (Figure~\ref{fig:canonicaldiv}):
\begin{eqnarray*}
\calD(r_{pq}) &=& B_F(\theta(p):\theta(q))=Y_{F,F^*}(\theta(p):\eta(q)),\\
&=& \calD^*(r_{qp})=B_{F^*}(\eta(q):\eta(p))=Y_{F^*,F}(\eta(q):\theta(p)),
\end{eqnarray*}
where $Y_{F,F^*}$ is the Fenchel-Young divergence:
$$
Y_{F,F^*}(\theta(p):\eta(q))=F(\theta(p))+F^*(\eta(q))-\inner{\theta(p)}{\eta(q)}.
$$

We have the dual global coordinate system $\eta=\nabla F(\theta)$ and domain $H=\{\nabla F(\theta)\st \theta\in\Theta\}$ which defines the dual 
Legendre-type potential function $(H,F^*(\eta))$. Legendre-type function ensures that ${F^*}^*=F$.

Manifold $\calM$ is called dually flat because the torsion-free affine connections $\nabla$ and $\nabla^*$ induced by the  potential functions $F(\theta)$ and $F^*(\eta)$ linked with the Legendre-Fenchel transformation are flat: That is,
their Christoffel symbols vanishes in the dual coordinate system: $\Gamma(\theta)=0$ and $\Gamma^*(\eta)=0$.

\begin{figure}
\centering
\includegraphics[width=0.9\textwidth]{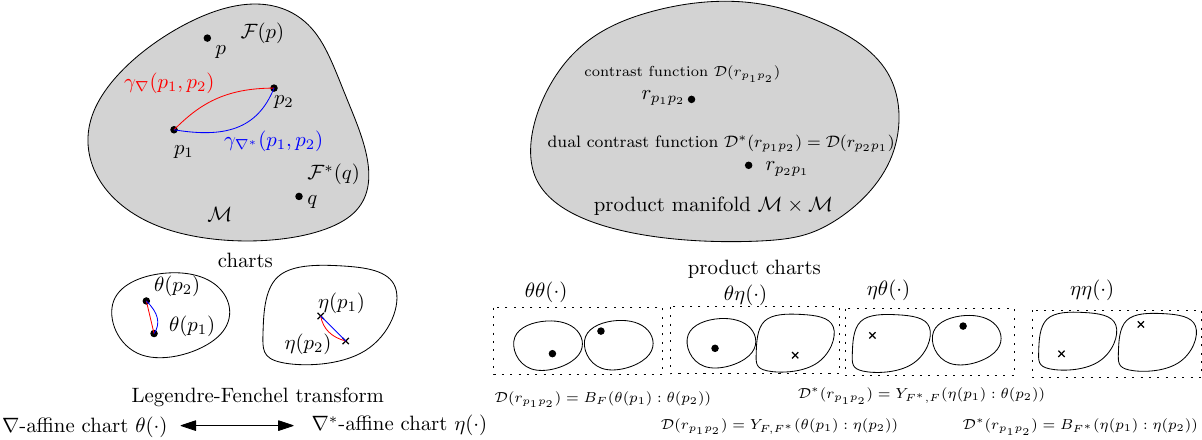}

\caption{The canonical divergence $\calD$ and the dual canonical divergence $\calD^*$ on a dually flat space $\calM$ equiped with potential functions $\calF$ and $\calF^*$ are viewed as single parameter  contrast functions on the product manifold $\calM\times\calM$:
The divergence $\calD$ can be expressed either using the $\theta\times\theta$-coordinate system as a Bregman divergence or using the mixed $\theta\times\eta$-coordinate system as a Fenchel-Young divergence.
Similarly, the dual divergence $\calD$ can be expressed either using the $\eta\times\eta$-coordinate system as a dual Bregman divergence or using the mixed $\eta\times\theta$-coordinate system as a dual Fenchel-Young divergence.\label{fig:canonicaldiv}
}
\end{figure}

The Legendre-type function $(\Theta,F(\theta))$ is not defined uniquely:
Function $\bar F(\bar\theta)=F(A\theta+b)+C\theta+d$ with $\bar\theta=A\theta+b$ for $A$ and $C$ invertible matrices and $b$ and $d$ vectors defines the same dually flat space with the same canonical divergence $D(p,q)$:
$$
D(p:q)=B_F(\theta(p):\theta(q))=B_{\bar F}(\bartheta(p):\bartheta(q)).
$$ 

Thus a log-convex Legendre-type function $Z(\theta)$ induces two dually flat spaces by considering the DFSs induced by $Z(\theta)$ and $F(\theta)=\log Z(\theta)$.
Let the gradient maps be $\eta=\nabla Z(\theta)$ and $\tilde\eta=\nabla F(\theta)=\frac{\eta}{Z(\theta)}$.

When $F(\theta)$ is chosen as the cumulant function of an exponential family, the Bregman divergence $B_F({\theta_1}:{\theta_2})$ can be interpreted as a statistical divergence between corresponding probability densities.
 Namely, the Bregman divergenec amounts to the reverse Kullback-Leibler divergence: $B_F({\theta_1}:{\theta_2})=D_\KL^*(p_{\theta_1}:p_{\theta_2})$ 
(see Appendix~\ref{sec:BFKLrev}).
 
Notice that deforming a convex function $F(\theta)$ into $F(\rho(\theta))$ such that $F\circ\rho$ remains strictly convex has been considered by  Yoshizawa and Tanabe~\cite{yoshizawa1999dual} to build a $2$-parameter deformation $\rho_{\alpha,\beta}$ of the dually flat space induced by the cumulant function $F(\theta)$ of the multivariate normal family.
See also the method of Hougaard~\cite{hougaard1983convex} to obtain other exponential families from a given exponential family.

Thus in general there are many more dually flat spaces with corresponding divergences and statistical divergences than the usually considered exponential family manifold~\cite{scarfone2014legendre} induced by the cumulant function.
It is interesting to consider their use in information sciences.

%%%%%%%%%
\section{Conclusion and discussion}\label{sec:concl}
%%%%%%%%%

For the machine learning practioner, it is well-known that the Kullback-Leibler divergence (KLD) between two probability densities $p_{\theta_1}$ and $p_{\theta_2}$ of an exponential family with cumulant function $F$ (free energy) amounts to a reverse Bregman divergence~\cite{azoury2001relative} induced by $F$ or equivalently to a reverse Fenchel-Young divergence~\cite{amari-1985}:
$$
D_\KL(p_{\theta_1}:p_{\theta_2})=B_F(\theta_2:\theta_1)=Y_{F,F^*}(\theta_2:\eta_1),
$$
where $\eta=\nabla F(\theta)$ is the dual moment or expectation parameter.

In this paper, we showed that the KLD extended to positive unnormalized densities  $\tp_{\theta_1}$ and $\tp_{\theta_2}$ of an exponential family with convex partition function $Z(\theta)$ (Laplace transform) amounts to a reverse Bregman divergence induced by $Z$ or equivalently to a reverse Fenchel-Young divergence:
$$
D_\KL(\tp_{\theta_1}:\tp_{\theta_2})=B_Z(\theta_2:\theta_1)=Y_{Z,Z^*}(\theta_2:\tilde\eta_1),
$$
where $\tilde\eta=\nabla Z(\theta)$.

More generally, we showed that the scaled $\alpha$-skewed Jensen divergences induced by the cumulant and partition functions between natural parameters coincide with 
 the scaled $\alpha$-skewed Bhattacharyya distances between probability densities and the $\alpha$-divergences between unnormalized densities, respectively:
\begin{eqnarray*}
D_{B,\alpha}^s(p_{\theta_1}:p_{\theta_2}) &=& J_{F,\alpha}^s(\theta_1:\theta_2) ,\\
D_\alpha(\tp_{\theta_1}:\tp_{\theta_2}) &=& J^s_{Z,\alpha}(\theta_1:\theta_2).
\end{eqnarray*}
We noticed that the partition functions $Z$ of exponential families are both convex and log-convex, and the corresponding cumulant functions are both convex and exponentially convex.

Figure~\ref{fig:summary} summarizes the relationships between statistical divergences between normalized and unnormalized densities of an exponential family and corresponding divergences between their natural parameters.
Notice that Brekelmans and Nielsen~\cite{Rob-2024} considered deformed uni-order likelihood ratio exponential families (LREFs) for annealing paths, and obtained an identity between the $\alpha$-divergences between unnormalized densities and Bregman divergences induced by multiplicatively scaled partition functions.  

\begin{figure}
\centering

\includegraphics[width=\textwidth]{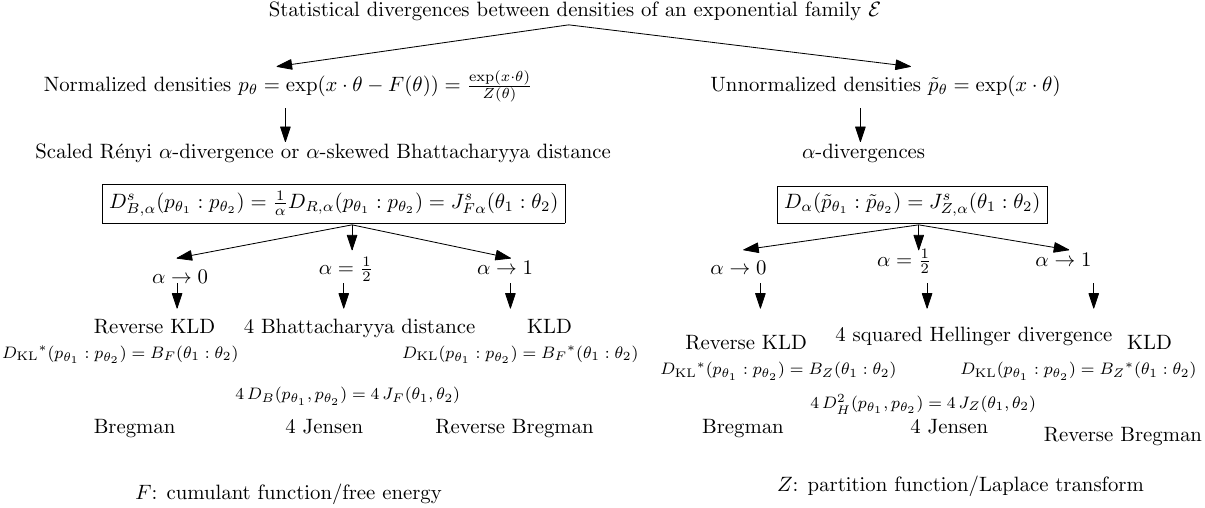}

\caption{Statistical divergences between normalized $p_\theta$ and unnormalized $\tp_{\theta}$ densities of an exponential family $\calE$ with corresponding divergences between their natural parameters.
Without loss of generality, we consider a natural exponential family (i.e., $t(x)=x$ and $k(x)=0$) with cumulant function $F$ and partition function $Z$.
$J_F$ and $B_F$ denote the Jensen and Bregman divergences induced by generator $F$, respectively.
Statistical divergences $D_{R,\alpha}$ and $D_{B,\alpha}$ denote the R\'enyi $\alpha$-divergences and the skewed $\alpha$-Bhattacharyya distances, respectively.
The superscript ``s'' indicates a rescaling by the multiplicative factor $\frac{1}{\alpha(1-\alpha)}$, and the superscript ``*'' denote the reverse divergence obtained by swapping the parameter order.
}\label{fig:summary}
\end{figure}

Since the log-convex partition function is also convex, we generalized the principle of building pairs of convex generators  using comparative convexity with respect to a pair of quasi-arithmetic means, and discussed about the induced dually flat spaces and divergences.
In particular, by considering the convexity-preserving deformations obtained by power mean generators, we show how to obtain a family of convex generators and dually flat spaces. 
Notice that some parametric families of Bregman divergences like the $\alpha$-divergences~\cite{amari2009alpha} or the $\beta$-divergences~\cite{hennequin2010beta} yield smooth families of dually flat spaces.

Banerjee et al.~\cite{banerjee2005clustering} proved a duality between regular exponential families and a subclass of Bregman divergences that they termed accordingly regular Bregman divergences. 
In particular, this duality allows one to view the Maximum Likelihood Estimator (MLE) of an exponential family with cumulant function $F$ as a right-sided Bregman centroid with respect to the Legendre-Fenchel dual  $F^*$.  
The scope of that duality was further extended for arbitrary Bregman divergences by introducing a class of generalized exponential families in~\cite{frongillo2014convex}.   

Concave deformations have also been recently studied in~\cite{ishige2022hierarchy}:
The authors introduce the $\log_\phi$-concavity induced by a positive continuous function $\phi$ generating a deformed logarithm $\log_\phi$ as the $(A,\log_\phi)$-comparative concavity (Definition 1.2~in\cite{ishige2022hierarchy}), and the weaker notion of $F$-concavity which corresponds to the $(A,F)$-concavity (Definition 2.1~in\cite{ishige2022hierarchy}, requiring strictly increasing functions $F$). 
Our deformation framework $Z=\tau^{-1}\circ F\circ \rho$ is more general since double-sided: 
We deform both the function $F$ by $F_\tau=\tau^{-1}\circ F$ and its argument $\theta$ by $\theta_\rho=\rho(\theta)$.

Exponentially concave functions are considered as generators of $L$-divergences in~\cite{wong2018logarithmic},
and $\alpha$-exponentially concave functions $G$ such that $\exp(\alpha G)$ are concave for $\alpha>0$ generalize the
$L$-divergences to $L_\alpha$-divergences which can be expressed equivalently using a generalization of the Fenchel-Young divergence based on the $c$-transforms~\cite{wong2018logarithmic}.
When $\alpha<0$, exponentially convex functions are considered instead of exponentially concave functions.
The information geometry induced by $L_\alpha$-divergences are dually projectively flat with constant curvature, and  
 reciprocally a dually projectively flat structure with constant curvature induces (locally) a canonical $L_{-\alpha}$-divergence.
Wong and Zhang~\cite{lambdaLegendre-2022} investigate a one-parameter deformation of convex duality called $\lambda$-duality by
 considering functions $f$ such that $\frac{1}{\lambda}(e^{\lambda f}-1)$ are convex for $\lambda\not=0$.
They define the $\lambda$-conjugate transform as a particular case of the $c$-transform~\cite{wong2018logarithmic}  and study the information geometry of the induced $\lambda$-logarithmic divergences. The $\lambda$-duality yields a generalization of exponential and mixture families to $\lambda$-exponential and $\lambda$-mixture families related to the R\'enyi divergence.

Finally, a class of statistical divergences called projective divergences are invariant under rescaling:
For example, the $\gamma$-divergence~\cite{fujisawa2008robust} $D_\gamma$ is such that $D_\gamma(p:q)=D_\gamma(\tp:\tq)$, and the $\gamma$-divergence tends to the KLD when $\gamma\rightarrow 0$.

\appendix

%%%%%%%%%
\section{Convexity of the cumulant functions of exponential families}\label{sec:appendix}
%%%%%%%%%
Let us prove Proposition~\ref{prop:naturalconvex} and Proposition~\ref{prop:FZconvexity}:

\noindent {\bf Proposition} \cite{barndorff2014information}
The natural parameter space $\Theta$ of an exponential family is convex.

\begin{proof}
Let $\Theta$ denote the natural parameter space:
$$
\Theta=\left\{\theta \st Z(\theta)=\int \exp(\inner{\theta}{x})\dmu<\infty\right\}
=\left\{\theta \st F(\theta)=\log\int \exp(\inner{\theta}{x})\dmu<\infty\right\}.
$$
Let $\theta_0, \theta_1\in\Theta$ and consider $\theta_\alpha=\theta_0+\alpha(\theta_1-\theta_0)$ for $\alpha\in (0,1)$.
In order to show that $\Theta$ is convex, we need to prove that $\theta_\alpha\in\Theta$, i.e., $Z(\theta_\alpha)<\infty$.
We have
\begin{eqnarray}
\int \exp(\inner{\theta_\alpha}{x}) \, \dmu(x) &=& \int \exp(\inner{\alpha\theta_0}{x})\,  \exp(\inner{(1-\alpha)\theta_1}{x})\dmu(x),\nonumber\\
&=&  \int \left(\exp(\inner{\theta_0}{x})\right)^{\alpha }\,  \left(\exp(\inner{\theta_1}{x})\right)^{(1-\alpha)}  \, \dmu(x). \label{eq:bholder}
\end{eqnarray}

Now, recall H\"older inequality for positive functions $f(x)$ and $g(x)$ with conjugate exponents $p$ and $q$ in $[1,\infty)$ such that 
$\frac{1}{p}+\frac{1}{q}=1$: 
$$
\int f(x) g(x)\dmu(x) \leq \left(\int f^p(x)\dmu(x)\right)^{\frac{1}{p}} \, \left(\int g^q(x)\dmu(x)\right)^{\frac{1}{q}}.
$$ 

Consider $f(x)=\left(\exp(\inner{\theta_0}{x})\right)^{\alpha}$ and $p=\frac{1}{\alpha}>1$ and $g(x)=\left(\exp(\inner{\theta_1}{x})\right)^{1-\alpha}$ with $q=\frac{1}{1-\alpha}>1$ (we check that $\frac{1}{p}+\frac{1}{q}=\alpha+1-\alpha=1$).
Thus we upper bound Eq.~\ref{eq:bholder} using H\"older inequality as follows:
\begin{equation}
\int \exp(\inner{\theta_\alpha}{x}) \, \dmu(x) \leq 
\left(\int \exp(\inner{\theta_0}{x}) \dmu(x)\right)^\alpha  \,
\left(\int \exp(\inner{\theta_1}{x}) \dmu(x) \right)^{1-\alpha} <\infty, \label{eq:ineqholder}
\end{equation}
since both  $\int \exp(\inner{\theta_0}{x}) \dmu(x)<\infty$ and $\int \exp(\inner{\theta_1}{x}) \dmu(x)<\infty$  because $\theta_0$ and $\theta_1$ both belong to $\Theta$. Hence, we have shown that $\Theta$ is convex.
\end{proof}

\noindent {\bf Proposition} \cite{barndorff2014information} 
The  cumulant function $F(\theta)$ is strictly convex and the partition function $Z(\theta)$  is positive and strictly log-convex.

\begin{proof}
 Eq.~\ref{eq:ineqholder} can be rewritten as
\begin{equation} \label{eq:zlogconvex}
Z(\theta_\alpha) \leq Z(\theta_0)^\alpha \, Z(\theta_1)^{1-\alpha}.
\end{equation}
Since the logarithm is a strictly increasing function, we have $a\leq b \Leftrightarrow \log(a)\leq \log (b)$ for any $a,b>0$, and it follows that:
\begin{eqnarray*}
\log Z(\theta_\alpha) &\leq& \log\left( Z(\theta_0)^\alpha \, Z(\theta_1)^{1-\alpha}\right),\\
F(\alpha\theta_0+(1-\alpha)\theta_1) &\leq & \alpha F(\theta_0) + (1-\alpha) F(\theta_1),
\end{eqnarray*}
hence the function $F(\theta)$ is strictly convex since inequality holds iff $\theta_0=\theta_1$.

Now, a function $Z(\theta)$ is strictly log-convex iff $\log Z(\theta)$ is strictly convex~\cite{niculescu2006convex}.
Since $Z(\theta)=\exp(F(\theta))$, we deduce that $Z(\theta)$ is positive and strictly log-convex since $\log Z(\theta)=F(\theta)$ is strictly convex.
\end{proof}

In Proposition~\ref{prop:logcvxcvx}, we prove that strictly log-convex functions are strictly convex.
Let us prove a weaker version of this proposition using the second-order differentiability of convex functions:

\noindent {\bf Proposition} 
A log-convex $C^2$ function $Z:\Theta\subset\bbR^m\rightarrow\bbR$ is convex. 
 
\begin{proof}
A $C^2$ function $f(x)$ is convex iff $f''(x)\geq 0$ but a $C^2$ function $f(x)$ with $f''(x)>0$ is strictly convex but not necessarily the converse: 
For example, $f(x)=x^4$ is strictly convex but $f''(x)=12x^2$ vanishes for $x=0$.
We can prove the convexity of log-convex functions by considering successively the univariate and multivariate cases as follows:

\begin{itemize}

\item In the univariate case $m=1$, consider the log-convex function $Z(\theta)=\exp({F(\theta)})$ for $F(\theta)$ a strictly convex function with $F''(\theta)\geq 0$.
Then we have $Z''(\theta)=F''(\theta)e^{F(\theta)}+(F'(\theta))^2 e^{F(\theta)}\geq 0$.

\item In the multivariate case $m>1$, $C^2$ function $F(\theta)$ is strictly convex if its Hessian is positive semi-definite.
Let $Z(\theta)$ be $C^2$ strictly log-convex, i.e., $F(\theta)=\log Z(\theta)$ is strictly convex.
Since $Z(\theta)=e^{F(\theta)}$ for $\nabla^2 F(\theta)\succeq 0$, we need to check that $\nabla^2 Z(\theta)\succeq 0$.
Let $\partial_l=\frac{\partial}{\partial\theta_l}$.
Then we have
\begin{eqnarray*}
\partial_i Z(\theta) &=& \partial_i e^{F(\theta)} = e^{F(\theta)}\, \partial_i F(\theta),\\
\partial_i \partial_j Z(\theta)& =& \partial_i \partial_j e^{F(\theta)} = e^{F(\theta)}\, \left( \partial_i\partial_j F(\theta)+(\partial_i F(\theta))(\partial_j F(\theta))\right),
\end{eqnarray*}
or in matrix form
$$
\nabla^2 Z(\theta)=e^{F(\theta)}\, \left(\nabla^2 F(\theta)+ \nabla F(\theta) (\nabla F(\theta))^\top )\right).
$$
Since (i) $e^{F(\theta)}>0$, (ii) $\nabla^2 F(\theta)\succeq 0$ and (iii) $\nabla F(\theta) (\nabla F(\theta))^\top\succeq 0$ 
(since a matrix outer-product is positive semi-definite: 
$\theta^\top \nabla F(\theta) (\nabla F(\theta))^\top \theta=(\theta^\top \nabla F(\theta))^2\geq 0$), we deduce that
$\nabla^2 Z(\theta)$ is positive semi-definite. Hence, $Z(\theta)$ is convex.
\end{itemize}
\end{proof}

%%%%
\section{Statistical divergence recovered from the Bregman divergence induced by cumulant functions}\label{sec:BFKLrev}
%%%%

\noindent {\bf Proposition}
The Bregman divergence $B_F({\theta_1}:{\theta_2})$ induced by the cumulant function $F(\theta)$ of an exponential family can be interpreted as a statistical divergence between corresponding probability densities $p_{\theta_1}$ and $p_{\theta_2}$:
 $B_F({\theta_1}:{\theta_2})=D_\KL^*(p_{\theta_1}:p_{\theta_2})$.

\begin{proof}
First, we prove that $F^*(\eta)=E_{p_{\theta}}[\log p_{\theta}]= -H(p_{\theta})$:
\begin{eqnarray*}
H(p_\theta) &=& -E_{p_\theta}[\log p_\theta],\\
&=& -E_{p_\theta}[\inner{\theta}{x}-F(\theta)],\\
&=& F(\theta)-\inner{\theta}{E_{p_\theta}[x]},\\
&=& F(\theta)-\inner{\theta}{\eta},\\
&=& -F^*(\eta).
\end{eqnarray*}
Notice that when $k(x)\not=0$, we have an additional term: $H(p_\theta)=-F^*(\eta)-E_{p_\theta}[k(x)]$.

Then using the linear property of expectation, we have
\begin{eqnarray*}
\inner{\theta_1}{\eta_2} &=& \inner{\theta_1}{E_{p_{\theta_2}}[x]},\\
&=& E_{p_{\theta_2}}[\inner{\theta_1}{x}]=E_{p_{\theta_2}}[\log \tp_{\theta_1}(x)],\\
&=& E_{p_{\theta_2}}[\log p_{\theta_1}(x)+F(\theta_1)],\\
&=& E_{p_{\theta_2}}[\log p_{\theta_1}(x)]+F(\theta_1).
\end{eqnarray*}

Thus it follows that the canonical divergence can be expressed in the mixed coordinate system as a Fenchel-Young divergence:

\begin{eqnarray*}
D(p_{\theta_1}:p_{\theta_2})= Y_{F,F^*}(\theta_1:\eta_2) &=& F(\theta_1) + F^*(\eta_2) -\inner{\theta_1}{\eta_2},\\
&=& F(\theta_1)-H(p_{\theta_2}) - E_{p_{\theta_2}}[\log p_{\theta_1}(x)] -F(\theta_1),\\
&=& H^\times(p_{\theta_2}:p_{\theta_1})-H(p_{\theta_2}),\\
&=& D_\KL(p_{\theta_2}:p_{\theta_1}),\\
&=& D_\KL^*(p_{\theta_1}:p_{\theta_2}).
\end{eqnarray*}
Thus the canonical divergence induced by the cumulant function amounts to a reverse Kullback-Leibler divergence $D_\KL^*$ between probability densities.
\end{proof} 

We can give statistical interpretations of the potential functions and their gradients as follows:
\begin{itemize}
\item $F(\theta)$ is the cumulant function (also called free energy in thermodynamics), 
\item $\eta=\nabla F(\theta)=E_{p_\theta}[t(x)]$ is the moment of the sufficient statistic,
\item $F^*(\eta)=-H(p_\theta)$ is the negentropy, and 
\item $\theta=\nabla F^*(\eta)$ are the Lagrangian multipliers in the maximum entropy problem~\cite{jaynes1957information}.
\end{itemize}

%\bibliographystyle{plain} 
%\bibliography{DivergencesDualBIB}

\begin{thebibliography}{10}

\bibitem{aczel1947generalization}
John Acz{\'e}l.
\newblock A generalization of the notion of convex functions.
\newblock {\em Det Kongelige Norske Videnskabers Selskabs Forhandlinger,
  Trondheim}, 19(24):87--90, 1947.

\bibitem{amari2009alpha}
Shun-Ichi Amari.
\newblock {$\alpha$-Divergence is unique, belonging to both $f$-divergence and
  Bregman divergence classes}.
\newblock {\em IEEE Transactions on Information Theory}, 55(11):4925--4931,
  2009.

\bibitem{amari-1985}
Shun-ichi Amari.
\newblock {\em Differential-geometrical methods in statistics}, volume~28.
\newblock Springer Science \& Business Media, 2012.
\newblock First edition, 1985.

\bibitem{IG-2016}
Shun-ichi Amari.
\newblock {\em Information Geometry and Its Applications}.
\newblock Applied Mathematical Sciences. Springer Japan, 2016.

\bibitem{azoury2001relative}
Katy~S Azoury and Manfred~K Warmuth.
\newblock Relative loss bounds for on-line density estimation with the
  exponential family of distributions.
\newblock {\em Machine learning}, 43:211--246, 2001.

\bibitem{banerjee2005clustering}
Arindam Banerjee, Srujana Merugu, Inderjit~S Dhillon, Joydeep Ghosh, and John
  Lafferty.
\newblock {Clustering with Bregman divergences}.
\newblock {\em Journal of machine learning research}, 6(10), 2005.

\bibitem{barndorff2014information}
Ole Barndorff-Nielsen.
\newblock {\em Information and exponential families}.
\newblock John Wiley \& Sons, 2014.

\bibitem{bauschke2008proximal}
Heinz~H Bauschke, Rafal Goebel, Yves Lucet, and Xianfu Wang.
\newblock The proximal average: basic theory.
\newblock {\em SIAM Journal on Optimization}, 19(2):766--785, 2008.

\bibitem{Bregman-1967}
Lev~M. Bregman.
\newblock The relaxation method of finding the common point of convex sets and
  its application to the solution of problems in convex programming.
\newblock {\em USSR computational mathematics and mathematical physics},
  7(3):200--217, 1967.

\bibitem{Rob-2024}
Rob Brekelmans and Frank Nielsen.
\newblock {Variational representations of annealing paths: Bregman information
  under monotonic embeddings}.
\newblock {\em Information Geometry}, 2024.
\newblock arxiv:2209.07481.

\bibitem{brown1986fundamentals}
Lawrence~D Brown.
\newblock Fundamentals of statistical exponential families with applications in
  statistical decision theory.
\newblock {\em Lecture Notes-Monograph Series}, 9, 1986.

\bibitem{CobbPEF-1983}
Loren Cobb, Peter Koppstein, and Neng~Hsin Chen.
\newblock Estimation and moment recursion relations for multimodal
  distributions of the exponential family.
\newblock {\em Journal of the American Statistical Association},
  78(381):124--130, 1983.

\bibitem{dai2019exponential}
Bo~Dai, Zhen Liu, Hanjun Dai, Niao He, Arthur Gretton, Le~Song, and Dale
  Schuurmans.
\newblock Exponential family estimation via adversarial dynamics embedding.
\newblock {\em Advances in Neural Information Processing Systems}, 32, 2019.

\bibitem{del1994singly}
Joan Del~Castillo.
\newblock The singly truncated normal distribution: a non-steep exponential
  family.
\newblock {\em Annals of the Institute of Statistical Mathematics}, 46:57--66,
  1994.

\bibitem{efron2022exponential}
Bradley Efron.
\newblock {\em Exponential families in theory and practice}.
\newblock Cambridge University Press, 2022.

\bibitem{eguchi1985differential}
Shinto Eguchi.
\newblock A differential geometric approach to statistical inference on the
  basis of contrast functionals.
\newblock {\em Hiroshima mathematical journal}, 15(2):341--391, 1985.

\bibitem{frongillo2014convex}
Rafael Frongillo and Mark~D Reid.
\newblock {Convex foundations for generalized MaxEnt models}.
\newblock In {\em AIP Conference Proceedings}, volume 1636, pages 11--16. 2014.

\bibitem{fujisawa2008robust}
Hironori Fujisawa and Shinto Eguchi.
\newblock Robust parameter estimation with a small bias against heavy
  contamination.
\newblock {\em Journal of Multivariate Analysis}, 99(9):2053--2081, 2008.

\bibitem{grunwald2007minimum}
Peter~D Gr{\"u}nwald.
\newblock {\em The minimum description length principle}.
\newblock MIT press, 2007.

\bibitem{hennequin2010beta}
Romain Hennequin, Bertrand David, and Roland Badeau.
\newblock {Beta-divergence as a subclass of Bregman divergence}.
\newblock {\em IEEE Signal Processing Letters}, 18(2):83--86, 2010.

\bibitem{hougaard1983convex}
Philip Hougaard.
\newblock {\em Convex functions in exponential families}.
\newblock Inst. of Math. Statistics, University of Copenhagen, 1983.

\bibitem{ishige2022hierarchy}
Kazuhiro Ishige, Paolo Salani, and Asuka Takatsu.
\newblock Hierarchy of deformations in concavity.
\newblock {\em Information Geometry}, pages 1--19, 2022.

\bibitem{jaynes1957information}
Edwin~T Jaynes.
\newblock Information theory and statistical mechanics.
\newblock {\em Physical review}, 106(4):620, 1957.

\bibitem{kailath1967divergence}
Thomas Kailath.
\newblock {The divergence and Bhattacharyya distance measures in signal
  selection}.
\newblock {\em IEEE transactions on communication technology}, 15(1):52--60,
  1967.

\bibitem{kindermann1980markov}
Ross Kindermann and J~Laurie Snell.
\newblock {\em Markov random fields and their applications}, volume~1.
\newblock American Mathematical Society, 1980.

\bibitem{Kolmogorov-1930}
Andre{\u\i} Kolmogorov.
\newblock {\em Sur la notion de la moyenne}.
\newblock G. Bardi, tip. della R. Accad. dei Lincei, 1930.

\bibitem{komori2021unified}
Osamu Komori and Shinto Eguchi.
\newblock {A unified formulation of $k$-Means, fuzzy $c$-Means and Gaussian
  mixture model by the Kolmogorov--Nagumo average}.
\newblock {\em Entropy}, 23(5):518, 2021.

\bibitem{lecun2006tutorial}
Yann LeCun, Sumit Chopra, Raia Hadsell, M~Ranzato, and Fujie Huang.
\newblock A tutorial on energy-based learning.
\newblock {\em Predicting structured data}, 1(0), 2006.

\bibitem{niculescu2006convex}
Constantin Niculescu and Lars-Erik Persson.
\newblock {\em Convex functions and their applications}, volume~23.
\newblock Springer, 2018.
\newblock second edition (first edition published in 2006).

\bibitem{nielsen2021fast}
Frank Nielsen.
\newblock {Fast approximations of the Jeffreys divergence between univariate
  Gaussian mixtures via mixture conversions to exponential-polynomial
  distributions}.
\newblock {\em Entropy}, 23(11):1417, 2021.

\bibitem{nielsen2022statistical}
Frank Nielsen.
\newblock {Statistical divergences between densities of truncated exponential
  families with nested supports: Duo Bregman and duo Jensen divergences}.
\newblock {\em Entropy}, 24(3):421, 2022.

\bibitem{nielsen2011burbea}
Frank Nielsen and Sylvain Boltz.
\newblock {The Burbea-Rao and Bhattacharyya centroids}.
\newblock {\em IEEE Transactions on Information Theory}, 57(8):5455--5466,
  2011.

\bibitem{nielsen2019monte}
Frank Nielsen and Ga{\"e}tan Hadjeres.
\newblock {Monte Carlo information-geometric structures}.
\newblock {\em Geometric Structures of Information}, pages 69--103, 2019.

\bibitem{nielsen2017generalizing}
Frank Nielsen and Richard Nock.
\newblock {Generalizing skew Jensen divergences and Bregman divergences with
  comparative convexity}.
\newblock {\em IEEE Signal Processing Letters}, 24(8):1123--1127, 2017.

\bibitem{rockafellar1967conjugates}
Ralph~Tyrrell Rockafellar.
\newblock {Conjugates and Legendre transforms of convex functions}.
\newblock {\em Canadian Journal of Mathematics}, 19:200--205, 1967.

\bibitem{scarfone2014legendre}
Antonio~M. Scarfone and Tatsuaki Wada.
\newblock Legendre structure of $\kappa$-thermostatistics revisited in the
  framework of information geometry.
\newblock {\em Journal of Physics A: Mathematical and Theoretical},
  47(27):275002, 2014.

\bibitem{shima2007geometry}
Hirohiko Shima.
\newblock {\em {The geometry of Hessian structures}}.
\newblock World Scientific, 2007.

\bibitem{van2014renyi}
Tim Van~Erven and Peter Harremos.
\newblock {R{\'e}nyi divergence and Kullback-Leibler divergence}.
\newblock {\em IEEE Transactions on Information Theory}, 60(7):3797--3820,
  2014.

\bibitem{wainwright2008graphical}
Martin~J Wainwright, Michael~I Jordan, et~al.
\newblock Graphical models, exponential families, and variational inference.
\newblock {\em Foundations and Trends{\textregistered} in Machine Learning},
  1(1--2):1--305, 2008.

\bibitem{wong2018logarithmic}
Ting-Kam~Leonard Wong.
\newblock {Logarithmic divergences from optimal transport and R{\'e}nyi
  geometry}.
\newblock {\em Information Geometry}, 1(1):39--78, 2018.

\bibitem{yoshizawa1999dual}
Shintaro Yoshizawa and Kunio Tanabe.
\newblock {Dual differential geometry associated with the Kullbaek-Leibler
  information on the Gaussian distributions and its $2$-parameter
  deformations}.
\newblock {\em SUT Journal of Mathematics}, 35(1):113--137, 1999.

\bibitem{zhang2004divergence}
Jun Zhang.
\newblock Divergence function, duality, and convex analysis.
\newblock {\em Neural computation}, 16(1):159--195, 2004.

\bibitem{zhang2021lambda}
Jun Zhang and Ting-Kam~Leonard Wong.
\newblock $\lambda$-deformed probability families with subtractive and divisive
  normalizations.
\newblock In {\em Handbook of Statistics}, volume~45, pages 187--215. Elsevier,
  2021.

\bibitem{lambdaLegendre-2022}
Jun Zhang and Ting-Kam~Leonard Wong.
\newblock $\lambda$-deformation: A canonical framework for statistical
  manifolds of constant curvature.
\newblock {\em Entropy}, 24(2):193, 2022.

\end{thebibliography}

\end{document}